\newtheorem{mylem}{Lemma}
\newtheorem{mydef}{Definition}%
\newtheorem*{myprob}{Problem}%
\newtheorem{mytheo}{Theorem}%
\newtheorem{mycor}{Corollary}%
\newtheorem{myrem}{Remark}%
\DeclareMathOperator*{\argmin}{arg\,min}
\DeclareMathOperator*{\argmax}{arg\,max}
\DeclareMathOperator*{\diag}{\mathrm{diag}}
\def\x{{\boldsymbol x}}
\def\w{{\boldsymbol w}}
\def\Y{{\boldsymbol Y}}
\def\a{{\boldsymbol a}}
\def\p{{\boldsymbol p}}
\def\h{{\boldsymbol h}}
\def\n{{\boldsymbol n}}
\def\thetab{{\boldsymbol \theta}}
\def\vec{{\rm vec}}
\def\om{{\omega}}
\def\Sb{{\boldsymbol S}}
\def\H{{\boldsymbol H}}
\def\R{{\boldsymbol R}}
\def\Q{{\boldsymbol Q}}
\def\b{{\boldsymbol b}}
\def\r{{\boldsymbol r}}
\def\y{{\boldsymbol y}}
\def\G{{\boldsymbol G}}
\def\A{{\boldsymbol A}}
\def\B{{\boldsymbol B}}
\def\Phib{{\boldsymbol \Phi}}
\def\Psib{{\boldsymbol \Psi}}
\def\U{{\boldsymbol U}}
\def\u{{\boldsymbol u}}
\title{Graph Sampling for Covariance Estimation}
\author{Sundeep Prabhakar Chepuri,~\IEEEmembership{Member,~IEEE,} 
       Geert~Leus,~\IEEEmembership{Fellow,~IEEE} 
\thanks{The authors are with the Faculty of Electrical Engineering, Mathematics and Computer Science, Delft University of Technology, The Netherlands. Email:~\{s.p.chepuri;g.j.t.leus\}@tudelft.nl.}
\thanks{This work was supported by the KAUST-MIT-TUD consortium grant~{OSR-2015-Sensors-2700}. 
A conference precursor of this manuscript appeared in the the Ninth IEEE Sensor Array and
Multichannel Signal Processing Workshop  (SAM), Rio de Janeiro, Brazil, June 2016~\cite{sundeep16sam}.}
}
\begin{document}
\maketitle
\begin{abstract}
%
%
%
In this paper the focus is on subsampling as well as reconstructing the second-order statistics of signals residing on nodes of arbitrary undirected graphs.
Second-order stationary graph signals may be obtained by graph filtering zero-mean white noise and they admit a well-defined power spectrum whose shape is determined by the frequency response of the graph filter. Estimating the graph power spectrum forms an important component of stationary graph signal processing and related inference tasks such as Wiener prediction or inpainting on graphs. The central result of this paper is that by sampling a significantly smaller subset of vertices and using simple least squares, we can reconstruct the second-order statistics of the graph signal from the subsampled observations, and more importantly, without any spectral priors. To this end, both a nonparametric approach as well as parametric approaches including moving average and autoregressive models for the graph power spectrum are considered. The results specialize for undirected circulant graphs in that the graph nodes leading to the best compression rates are given by the so-called minimal sparse rulers. A near-optimal greedy algorithm is developed to design the subsampling scheme for the non-parametric and the moving average models, whereas a particular subsampling scheme that allows linear estimation for the autoregressive model is proposed. Numerical experiments on synthetic as well as real datasets related to climatology and processing handwritten digits are provided to demonstrate the developed theory. 

\end{abstract}
{\IEEEkeywords Graph signal processing, stationary graph signals, sparse sampling, graph power spectrum estimation, compressive covariance sensing.}

\section{Introduction}

Graphs are mathematical objects that can be used for describing and explaining relationships in complex datasets, which appear commonly in modern data analysis. The nodes of the graph denote the entities themselves and the edges encode the pairwise relationship between these
entities. Some examples of such complex-structured data beyond traditional time-series include gene regulatory networks~\cite{barabasi2004network}, brain networks~\cite{bullmore2009complex}, transportation networks~\cite{guimera2005worldwide}, social and economic networks~\cite{jackson2010social}, and so on. Processing signals residing on the nodes of a graph taking into account the relationships between them as explained by the edges of the graph is recently receiving a significant amount of interest. In particular, generalizing as well as drawing parallels of classical time-frequency analysis tools to {\it graph data analysis} while incorporating the irregular structure on which the graph signals are defined is an emerging area of research~\cite{shuman2013Emerging,sandryhaila2014big}. 

Graph signals could be stochastic in nature and they can be modeled as the output of a graph filter~\cite{sandryhaila2013discrete} whose input is also a random {signal} (e.g., white noise). We are interested in sampling and processing stationary graph signals, which are stochastic signals defined on graphs with
second-order statistics that are invariant similar to time series, but in the graph setting. Second-order stationary graph signals are characterized by a well-defined {\it graph power spectrum}. They can be generated by { graph} filtering white noise (or any other stationary graph signal) and the graph power spectrum of the filtered signal will be characterized by the squared magnitude of the frequency response of the filter{; see~\cite{benjamin15eusipco,perraudin2016stationary,marques2016stationary,girault2015signal}.}  

The second-order statistics of graph signals, or equivalently the graph power spectrum, are essential to solve inference problems on graphs in the Bayesian setting such as smoothing, prediction, inpainting, and deconvolution; {see~\cite{girault2014semi} and~\cite{perraudin2016stationary} for some Bayesian inference problems}. These inference problems are solved by designing optimum (in the minimum mean squared error sense) Wiener-like filters and the graph power spectrum forms a crucial component of such filter designs.  
In order to compute the graph power spectrum, traditional methods require { the processing of signals on all graph nodes.}
The sheer quantity of data and scale of the graph often inhibit this reconstruction method. Therefore, the main question that we address in this paper is,
{\it can we reconstruct the graph power spectrum by observing a small subset of graph nodes?} 

\subsection{Related works and main results}

The notion of stationarity of signals on graphs and related definitions can be found in~\cite{benjamin15eusipco,perraudin2016stationary,marques2016stationary,girault2015signal}, and it will be briefly explained in the next section as well.  Several techniques for graph power spectrum estimation have been discussed in~\cite{perraudin2016stationary} and~\cite{marques2016stationary}, and they are based on observations from all the nodes. In this paper, we consider the problem of reconstructing the second-order statistics of signals on graphs, but from subsampled observations. 
The fact that we are reconstructing the graph power spectrum, instead of the graph signal, enables us to subsample the graph signal (or sparsely sample the graph nodes), even without any spectral priors (e.g., sparsity, bandlimited with known support). This is a new and different perspective as compared to subsampling for graph signal reconstruction~\cite{anis2014towards,tsitsvero2015uncertainty,varma2015spectrum,chen2015discrete}, which imposes some spectral prior that enables graph signal reconstruction. The proposed concept basically generalizes the field of compressive covariance sensing~\cite{ariananda2012compressive,romero2013wideband,Romero16CCSspm} to the graph setting.

The aim of this paper is to reconstruct second-order statistics of stationary graph signals from observations available at a few nodes using simple reconstruction methods such as least squares. The contributions are summarized as the following main results:
\begin{itemize}
\item {\it Non-parametric approach}: Without any spectral priors, second-order statistics of length-$N$ stationary graph signals can be recovered using least squares from a reduced subset of $\mathcal{O}(\sqrt{N})$ observations, i.e., by observing $\mathcal{O}(\sqrt{N})$ graph nodes. In this case, the processing is done in the graph spectral domain. 
\item {\it Circulant graphs:}  As a special case, when the graphs are circulant, the identifiability results are elegant. That is, the subset of nodes resulting in the best compression rates are given by the so-called {\it minimal sparse rulers}. This is reminiscent of compressive covariance sensing~\cite{Romero16CCSspm} for data that reside on a regular support such as time series, which is a specific instance of a circulant graph. 
\item {\it Parametric approach}: It is also possible to model the graph power spectrum using a small number of parameters, e.g., the graph signals may be modeled by {\it moving average} or {\it autoregressive} graph filters. The reconstruction of the second-order statistics of the graph signal then boils down to the estimation of moving average or autoregressive coefficients. Such a parameterization allows for a higher compression. 
When the graph power spectrum is modeled using a moving average graph filter, the second-order statistics can be recovered using least squares from 
$\mathcal{O}(\sqrt{Q})$ observations, where $Q = \min\{2L-1, N\}$ with $L$ being the number of moving average filter coefficients. When the graph power spectrum is modeled using an autoregressive graph filter, $P$ autoregressive filter coefficients can be recovered using {\it linear least squares} by observing $\mathcal{O}(P)$ nodes. 

\item {\it Subsampler design:}  The proposed samplers are deterministic and they perform {\it node} subsampling. Subsampler design, therefore, becomes a discrete combinatorial optimization problem. For the spectral domain and moving average case, the subsampler can be designed using a near-optimal greedy algorithm. However, for the autoregressive approach, the sampler design depends also on (unobserved) data, and thus a mean squared error optimal design is not possible. This is due to the fact that we restrict ourselves to a low-complexity linear estimator for the autoregressive filter coefficients. Nevertheless, we present a suboptimal technique to design a subsampler for the autoregressive case as well.    
\end{itemize}

\subsection{Outline and notation}
The remainder of the paper is organised as follows. The preliminary concepts of graph signal processing are discussed in Section~\ref{sec:prelims}.  
The proposed least squares based reconstruction of the second-order statistics based on the subsampled observations are discussed in Section~\ref{sec:subsampling}. Connections of compressive covariance sensing for time-series with sensing data residing on circulant graphs are discussed in
Section~\ref{sec:cyclicgraphs}. In Section~\ref{sec:parametric}, the graph power spectrum is represented with a small number of parameters under moving average and autoregressive models, and these parameters are then reconstructed using least squares  from subsampled observations. In Section~\ref{sec:finitedata}, we discuss the validity of the results provided in this paper for finite data records. Under the assumption that the data follows a Gaussian distribution, the maximum likelihood estimator and  the related Cram\'er-Rao bound are also derived. In Section~\ref{sec:subsampdesgn}, the design of sparse sampling matrices based on low-complexity greedy algorithms is discussed. A few examples to illustrate the proposed framework are provided in Section~\ref{sec:numericalresults}.  Finally, the paper concludes with Section~\ref{sec:conc}.

The notation used in this paper is described as follows. Upper (lower) boldface letters are used for matrices (column vectors). {Overbar $\bar{(\cdot)}$ denotes complex conjugation, $(\cdot)^T$ denotes the transpose, and $(\cdot)^H$ denotes the complex conjugate (Hermitian) transpose. $(\cdot)^{-T}$ is a shorthand notation for $\left((\cdot)^{-1}\right)^T$.} $\mathrm{diag}[\cdot]$ refers to a diagonal matrix with its argument on the main diagonal. ${\rm diag_r}[\cdot]$ represents a diagonal matrix with the argument on its diagonal, but with the all-zero rows removed. ${\boldsymbol 1}$ $({\boldsymbol 0})$ denotes the vector of all ones (zeros). ${\boldsymbol I}$ is an identity matrix. $\mathbb{E}\{\cdot\}$ denotes the expectation operation.   The $\ell_0$-(quasi) norm of ${\boldsymbol w}  = [w_1,w_2,\ldots,w_N]^T$ refers to the number of non-zero entries in ${\boldsymbol w}$, i.e., ${\|{\boldsymbol w}\|}_0 := |\{n\,: \, w_n \neq 0\}|$. The $\ell_1$-norm of ${\boldsymbol w}$ is denoted by ${\|{\boldsymbol w}\|}_1 = \sum_{n=1}^N |w_n|$.  The notation $\thicksim$ is read as ``is distributed according to". Unless and otherwise noted, logarithms are natural.
${\rm tr}\{\cdot\}$ is the matrix trace operator. ${\rm det}\{\cdot\}$ is the matrix determinant. ${\rm rank} (\cdot)$ denotes the rank of a matrix. $\lambda_{\rm min}\{{\boldsymbol A}\}$ ($\lambda_{\rm max}\{{\boldsymbol A}\}$) denotes the minimum (maximum) eigenvalue of a symmetric matrix ${\boldsymbol A}$. ${\boldsymbol A} \succeq {\boldsymbol B}$ means that ${\boldsymbol A} - {\boldsymbol B}$ is a positive semidefinite matrix. $\mathbb{S}^{N}$ ($\mathbb{S}^{N}_{+}$) denotes the set of symmetric (symmetric positive semi-definite) matrices of size $N \times N$. $|\mathcal{U}|$ denotes the cardinality of the set $\mathcal{U}$. $\otimes$ denotes the Kronecker product, $\circ$ denotes the Khatri-Rao or columnwise Kronecker product, and ${\rm vec}(\cdot)$ refers to the matrix vectorization operator. For a full column rank tall matrix ${\boldsymbol A}$, the left inverse is given by ${\boldsymbol A}^\dagger = ({\boldsymbol A}^H{\boldsymbol A})^{-1}{\boldsymbol A}^H$. 
The column span of $\A$ and row null space of $\A$ are denoted by ${\rm ran}(\A)$ and ${\rm null}(\A)$, respectively.
Properties that are frequently used in this paper:
\begin{itemize}
\item  ${\rm vec}({\boldsymbol A}{\boldsymbol B}{\boldsymbol C})=({\boldsymbol C}^T \otimes {\boldsymbol A}){\rm vec}({\boldsymbol B});$
\item ${\rm vec}({\boldsymbol A}{\rm diag}[{\boldsymbol b}]{\boldsymbol C})=({\boldsymbol C}^T \circ {\boldsymbol A}){\boldsymbol b}.$
\end{itemize}

\section{Preliminaries} \label{sec:prelims}

In this section, we introduce some preliminary concepts related to deterministic and stochastic signals defined on graphs.

\subsection{Graph signals and filtering}

Consider a dataset with $N$ elements denoted as $\x \in \mathbb{C}^N$, which live on an irregular structure represented by an {undirected} graph 
$\mathcal{G} = (\mathcal{V},\mathcal{E})$, where the vertex set $\mathcal{V} = \{v_1,\cdots,v_N\}$ denotes the set of nodes, and the edge set $\mathcal{E}$ reveals any connection between the nodes{, i.e., $(i,j)\in\mathcal{E}$ means that node $i$ is connected to node $j$}. The $n$th entry of $\x$, i.e., $x_n$, is indexed by node $v_n$ of the graph $\mathcal{G}$. Therefore, we refer to the dataset $\x$ as a  length-$N$ {\it graph signal}. 


Let us introduce an  operator ${\boldsymbol S} \in \mathbb{C}^{N \times N}$, where the $(i,j)$th entry of ${\boldsymbol S}$ denoted by $s_{i,j}$ { is nonzero only if $(i,j) \in \mathcal{E}$ and $s_{i,j}$ can also be nonzero if $i=j$ for $(i,j) \in \mathcal{E}$}, and is zero otherwise. The pattern of ${\boldsymbol S}$ captures the local structure of the graph. More specifically, for a graph signal $\x$, the signal 
$\Sb\x$ denotes the unit shifted version of $\x$. Hence ${\Sb}$ is referred to as the {\it graph-shift} operator~\cite{sandryhaila2013discrete}. Different choices for ${\boldsymbol S}$ include the graph Laplacian ${\boldsymbol L}$~\cite{shuman2013Emerging}, the adjacency matrix ${\boldsymbol A}$~\cite{sandryhaila2013discrete}, or their respective variants. {For undirected graphs, ${\boldsymbol S}$ is symmetric} (more generally, Hermitian), and thus it admits the following eigenvalue decomposition 
\begin{equation}
\label{eq:graphShift} 
\begin{aligned}
{\boldsymbol S} &= {\boldsymbol U}{\boldsymbol \Lambda}{\boldsymbol U}^H \\
&= [{\boldsymbol u}_1, \cdots, {\boldsymbol u}_N]\, {\rm diag}[\lambda_1,\cdots,\lambda_N] \, [{\boldsymbol u}_1, \cdots, {\boldsymbol u}_N]^H,
\end{aligned}
\end{equation}
where the eigenvectors $\{{\boldsymbol u}_n\}_{n=1}^N$ and the eigenvalues $\{\lambda_n\}_{n=1}^N$ of ${\boldsymbol S}$ provide the notion of frequency in the graph setting~\cite{shuman2013Emerging,sandryhaila2014big}. Specifically, $\{{\boldsymbol u}_n\}_{n=1}^N$ forms an orthonormal Fourier-like basis for graph signals with the graph frequencies denoted by $\{\lambda_n\}_{n=1}^N$.
Hence, the {\it graph Fourier transform} of a graph signal, $\x_f = [x_{f,1},x_{f,2},\ldots,x_{f,N}]^T \in \mathbb{C}^N$, is given by
\begin{equation}
\label{eq:graphFourier}
\x_f :={\boldsymbol U}^H \x \, \Leftrightarrow \, \x =: {\boldsymbol U} \x_f.
\end{equation} 

The frequency content of graph signals can be modified using {\it linear shift-invariant graph filters}~\cite{sandryhaila2013discrete,shuman2013Emerging}. { Let us call the system $\H \in \mathbb{C}^{N \times N}$ as a graph filter. If the eigenvalues of $\Sb$ are distinct, a shift-invariant graph filter, which satisfies ${\boldsymbol H}(\Sb\x) =\Sb({\boldsymbol H}\x)$, can be expressed as a polynomial in ${\boldsymbol S}$ as~\cite{sandryhaila2013discrete}}
\begin{equation}
\label{eq:graph_fitler}
\begin{aligned}
{\boldsymbol H} &=  h_0{\boldsymbol I} + h_1\Sb + \cdots + h_{L-1}\Sb^{L-1} \\
&= {\boldsymbol U}\left[h_0{\boldsymbol I}+ h_1{\boldsymbol \Lambda}+  \cdots + h_{L-1}{\boldsymbol \Lambda}^{L-1}\right]{\boldsymbol U}^H,
\end{aligned}
\end{equation}
where the filter ${\boldsymbol H}$ is of degree $L-1$ with filter coefficients ${\boldsymbol h} = [h_0,h_1,\ldots,h_{L-1}]^T \in \mathbb{C}^L$, and $L \leq N$ as $N$ is the degree of the minimal polynomial ({equal to the characteristic polynomial}) of $\Sb$. The diagonal matrix 
\begin{equation}
{\boldsymbol H}_f = \sum_{l=0}^{L-1} h_l{\boldsymbol \Lambda}^l = {\rm diag}[{\boldsymbol V}_L{\boldsymbol h}] = {\rm diag}[h_{f,1},\cdots,h_{f,N}] 
\label{eq:graphfilter_freq}
\end{equation}
can be viewed as the frequency response of the graph filter. Here, ${\boldsymbol V}_L$ is an $N \times L$ Vandermonde matrix with the $(i,j)$th entry as $\lambda_i^{j-1}$. 

%

\subsection{Stationary graph signals}

Let ${\boldsymbol x} = [x_1,x_2,\cdots,x_N]^T \in \mathbb{C}^N$ be a stochastic signal defined on the vertices of the graph $\mathcal{G}$ with expected value ${\boldsymbol m}_\x = \mathbb{E}\{{\boldsymbol x}\}$ and covariance matrix ${\boldsymbol R}_{\boldsymbol x} = \mathbb{E}\{({\boldsymbol x} - {\boldsymbol m}_\x)({\boldsymbol x} - {\boldsymbol m}_\x)^{H}\}$. Efforts to generalize some of the concepts of statistical time invariance or stationarity of signals defined over regular structures to random graph signals have been made in~\cite{benjamin15eusipco,perraudin2016stationary,marques2016stationary,girault2015signal}. For the sake of completeness, we will summarize the definitions from~\cite{benjamin15eusipco,perraudin2016stationary,marques2016stationary,girault2015signal} as follows.  
\begin{mydef}[Second-order stationarity] 

A random graph signal ${\boldsymbol x}$ is second-order stationary, if and only if, the following properties hold:
\begin{itemize}
\item[1.] The mean of the graph signal is { collinear to an eigenvector of $\Sb$ corresponding to the smallest eigenvalue, i.e., ${\boldsymbol m}_\x = m_\x \u_1$}.
\item[2.] Matrices ${\boldsymbol S}$ and ${\boldsymbol R}_{\boldsymbol x}$ can be simultaneously diagonalized. 
\end{itemize}
\label{def:graphstationarity}
\end{mydef}
{ Since we assume that the eigenvalues of $\Sb$ are distinct and $\U$ forms an orthonormal basis, property 2 in the above definition essentially means the {\it statistical orthogonality} of spectral components, i.e,. $\mathbb{E}\{x_{f,i}{\bar{x}_{f,j}}\}=0$ for $i \neq j$~\cite{girault2015signal}.}

For simplicity, from now on we will focus on graph signals with zero mean, {where we assume that $m_\x$ is either known or $m_\x$ can be set to zero by preprocessing the data as discussed in Section~\ref{sec:numericalresults}}. We can generate zero-mean second-order stationary graph signals by graph filtering zero-mean white noise. Let ${\boldsymbol n} =[n_1,n_2,\ldots,n_N]^T \in \mathbb{C}^N$ be zero-mean unit-variance noise with covariance matrix ${\boldsymbol R}_{\boldsymbol n} = {\boldsymbol I}$. Then, a zero-mean second-order stationary graph signal ${\boldsymbol x}$ can be modeled as
$
{\boldsymbol x} = {\boldsymbol H}{\boldsymbol n},
$ 
where ${\boldsymbol H}$ can be any valid graph filter. The filtered signal will have zero mean and covariance matrix ${\boldsymbol R}_{\boldsymbol x} = \mathbb{E}\{({\boldsymbol H}{\boldsymbol n})({\boldsymbol H}{\boldsymbol n})^H\}$ given by
\begin{equation}
\label{eq:cov_filteredsignal}
\begin{aligned}
{\boldsymbol R}_{\boldsymbol x}& = {\boldsymbol H}{\boldsymbol R}_{\boldsymbol n}{\boldsymbol H}^H   \\
&= {\boldsymbol U} {\rm diag}[|h_{f,1}|^2,\cdots,|h_{f,N}|^2]  {\boldsymbol U}^H\\
&= {\boldsymbol U} {\rm diag}[{\boldsymbol p}] {\boldsymbol U}^H,
\end{aligned}
\end{equation}
where $h_{f,n} = h_0 + h_1\lambda_n + \cdots + h_{L-1}\lambda_n^{L-1}$ is defined in \eqref{eq:graphfilter_freq}. This conforms to the second property listed in Definition~\ref{def:graphstationarity}. More generally, graph filtering any second-order stationary graph {signal} also results in a second-order stationary graph {signal} (it is easy to verify this using property $2$ in Definition~\ref{def:graphstationarity}).  The {nonnegative vector} ${\rm diag}[{\boldsymbol p}]$ in \eqref{eq:cov_filteredsignal} is referred to as the {\it graph power spectral density} or {\it graph power spectrum}. We now formally introduce the graph power spectrum through the following definition.

\begin{mydef}[Graph power spectrum]
The graph power spectral density of a second-order stationary graph signal is a real-valued nonnegative length-$N$ vector ${\boldsymbol p} = [p_1,p_2,\ldots,p_N]^T \in \mathbb{R}_+^N$ with entries given by
\begin{equation}
\label{eq:graphPSD}
p_n = {\boldsymbol u}_n^H {\boldsymbol R}_{\boldsymbol x} {\boldsymbol u}_n, \, n=1,2,\ldots,N.
\end{equation}
Alternatively, $p_n = |h_{f,n}|^2 \geq 0$, for $n=1,2,\ldots,N$, where  $h_{f,n} = h_0 + h_1\lambda_n + \cdots + h_{L-1}\lambda_n^{L-1}$ is defined in \eqref{eq:graphfilter_freq}.
\end{mydef}

Second-order stationarity is preserved by linear graph filtering. This means that stationary graph signals with a prescribed graph power spectrum can be generated by filtering white noise, where the graph power spectrum of the filtered signal is reshaped according to the frequency response of the graph filter~\cite{benjamin15eusipco,perraudin2016stationary,marques2016stationary}. As a result, the graph power spectrum reveals critical information about the second-order stationary graph signal, and thus estimating the graph power spectrum or recovering the second-order statistics of a graph signal is useful in many applications. 

{ We end this section by summarizing the list of assumptions made in this paper. 
\begin{enumerate}
\item The shift operator $\Sb$ is known.
\item The orthonormal basis $\U$ and the distinct eigenvalues  $\{\lambda_n\}_{n=1}^N$ of $\Sb$ are known a priori. 
\end{enumerate}
}

\section{Non-parametric Spectral Domain Approach} \label{sec:subsampling}

The size of the datasets {often} inhibits a direct computation of the second-order statistics, e.g., { by observing all the $N$ nodes and using \eqref{eq:graphPSD} to compute the graph power spectrum. This would computationally cost $\mathcal{O}(N^3)$}.
%
%
%
As such, compression or data reduction is preferred especially for large-scale data in the graph setting{~\cite{sandryhaila2014big}}.  In the context of graph signal processing, most works consider subsampling the graph signal $\x$ assuming some spectral prior to reconstruct it~\cite{anis2014towards,tsitsvero2015uncertainty,varma2015spectrum,chen2015discrete}. This approach is, in principle, also possible for recovering the second-order statistics of $\x$. However, when the goal is to reconstruct the second-order statistics of $\x$ (and not $\x$ itself), it is computationally advantageous, and allows for a stronger compression, when we avoid the intermediate step of reconstructing and storing $\x$. In this paper, we will therefore focus on recovering graph second-order statistics directly from subsampled graph signals. We refer to this problem as {\it graph covariance subsampling}.

The extension of compressive covariance sensing~\cite{ariananda2012compressive,romero2013wideband,Romero16CCSspm} to graph covariance subsampling is non-trivial. This is because for second-order (or wide-sense) stationary signals with a regular support, the covariance matrix has a clear structure (e.g., Toeplitz, circulant) that enables an elegant subsampler design, but for second-order stationary graph signals residing on arbitrary graphs, the covariance matrix does not admit {\it any} clear structure that can be easily exploited, in general.

Consider the problem of estimating the graph power spectrum of the second-order stationary graph signal ${\boldsymbol x} \in \mathbb{C}^N$ from a set of $K \ll N$ linear observations stacked in the vector ${\boldsymbol y} \in \mathbb{C}^K$, given by
\begin{equation}
\label{eq:subsampling}
{\boldsymbol y} = {\boldsymbol \Phi}{\boldsymbol x}, 
\end{equation}
where ${\boldsymbol \Phi}$ is a known $K \times N$ selection matrix with Boolean entries, i.e.,  ${\boldsymbol \Phi} \in \{0,1\}^{K \times M}$ (we will discuss the subsampler design in Section~\ref{sec:subsampdesgn}) and where several realizations of $\y$ may be available. 
The matrix ${\boldsymbol \Phi}$ is referred to as the {\it subsampling} or {\it sparse sampling} matrix, where the compression is achieved by setting~$K \ll N$. { For applications where graph nodes correspond to sensing devices (e.g., weather stations in climatology, electroencephalography (EEG) probes in brain networks), such a sparse sampling scheme results in a significant reduction in the hardware, storage and communications costs next to the reduction in the processing costs.} 

The covariance matrices $\R_\x = \mathbb{E}\{\x\x^H\} \in \mathbb{C}^{N \times N}$ and $\R_\y = \mathbb{E}\{\y\y^H\}\in \mathbb{C}^{K \times K}$ contain the second-order statistics of $\x$ and $\y$, respectively. { In practice, typically, multiple snapshots, say $N_s$ snapshots, are observed to form a sample covariance matrix. Forming the sample covariance matrix from $N_s$ snapshots of $\x$ costs $\mathcal{O}(N^2N_s)$, while forming the sample covariance matrix from $N_s$ snapshots of $\y$ {\it only} costs $\mathcal{O}(K^2N_s)$.} We now state the problem of interest as follows.
\begin{myprob} ({\bf Recovering second-order statistics}) For a known undirected graph $\mathcal{G}$, given {a number of realizations , say $N_s$,} of the subsampled length-$K$ graph signal $\y$ or the subsampled covariance matrix $\R_\y$, recover the { graph power spectrum~$\p$} and thus the covariance matrix $\R_\x$.
\end{myprob}

Let us decompose the graph signal $\x$ in terms of its graph Fourier transform coefficients as [cf. \eqref{eq:graphFourier}]
\[
\x  = \sum_{i=1}^N x_{f,i} \u_i. 
\]
This allows us to represent the covariance matrix $\R_\x= \mathbb{E}\{\x\x^H\}$ in the graph Fourier domain using the graph power spectrum $\p$ as
\begin{align}
\label{eq:Rxspectral}
\R_\x = \sum_{i=1}^N \mathbb{E}\{|x_{f,i}|^2\} \u_i\u_i^H = \sum_{i=1}^N p_i \u_i\u_i^H = \sum_{i=1}^N p_i \Q_i,
\end{align}
where we use the fact that for $i \neq j$ we have $\mathbb{E}\{x_{f,i}\bar{x}_{f,j}\}  =0$ and $\Q_i = \u_i\u_i^H$ is a size-$N$ rank-one matrix. Here, we expand $\R_\x$ using a set of $N$ Hermitian matrices $ \{\Q_1,\Q_2,\ldots,\Q_N\}$ as a basis. 
Vectorizing $\R_x$ in \eqref{eq:Rxspectral} results in
\[
\r_\x = {\rm vec}(\R_\x) = \sum_{i=1}^N p_i {\rm vec}(\Q_i)= \Psib_{\rm s} \p, 
\]
where we have stacked ${\rm vec}(\Q_i) = \bar{\u}_i \otimes \u_i$ to form the $N^2 \times N$ matrix 
$\Psib_{\rm s}$ as
\[
\Psib_{\rm s} = [\bar{\u}_1 \otimes \u_1,\cdots,\bar{\u}_N \otimes \u_N] = \bar{\U} \circ \U.
\]
The subscript ``${\rm s}$" in the matrix ${\boldsymbol \Psi}_{\rm s}$, which is constructed using the graph Fourier basis vectors, stands for {\it spectral domain}. 

Using the compression scheme described in \eqref{eq:subsampling}, the covariance matrix $\R_\y \in \mathbb{C}^{K \times K}$ of the subsampled graph signal ${\boldsymbol y}$ can be related to $\R_\x$ as  
\begin{equation}
\label{eq:subsamCov}
\begin{aligned}
{\boldsymbol R}_{\boldsymbol y} = {\boldsymbol \Phi}{\boldsymbol R}_{\boldsymbol x}{\boldsymbol \Phi}^T  =  \sum_{i=1}^N p_i \Phib \Q_i \Phib^T.
\end{aligned}
\end{equation}
This means that the expansion coefficients of $\R_\y$ with respect to the set $\{\Phib\Q_1\Phib^T,\Phib\Q_2\Phib^T,\cdots,\Phib\Q_N\Phib^T\}$ are the {\it same} as those of $\R_\x$ with respect to the set $\{\Q_1,\Q_2,\cdots,\Q_N\}$, and they are preserved under linear compression. It is not yet clear though whether these expansion coefficients, which basically represent the power spectrum, can be uniquely recovered from~$\R_\y$.

Vectorizing $\R_\y$ as \[
{\boldsymbol r}_{\boldsymbol y} = {\rm vec}({\boldsymbol R}_{\boldsymbol y}) = ({\boldsymbol \Phi} \otimes {\boldsymbol \Phi}) {\rm vec}(\R_\x) \in \mathbb{C}^{K^2}
\]
 we obtain
\begin{equation}
\label{eq:subsamCovVec}
\begin{aligned}
{\boldsymbol r}_{\boldsymbol y} &= \sum_{i=1}^N p_i ({\boldsymbol \Phi} \otimes {\boldsymbol \Phi}) (\bar{\u}_i \otimes \u_i) = \sum_{i=1}^N p_i ({\boldsymbol \Phi}\bar{\u}_i \otimes {\boldsymbol \Phi}\u_i) \\
&= ({\boldsymbol \Phi} \otimes {\boldsymbol \Phi}) {\boldsymbol \Psi}_{\rm s}{\boldsymbol p}.
\end{aligned}
\end{equation}
This linear system with $N$ unknowns has a unique solution if $({\boldsymbol \Phi} \otimes {\boldsymbol \Phi}) {\boldsymbol \Psi}_{\rm s}$ has full column rank, which requires $K^2 \geq N$. Assuming that this is the case, the  graph power spectrum (thus the second-order statistics of $\x$) can be estimated in closed form via least squares:
\begin{equation}
\label{eq:ls_gspectral}
\widehat{\boldsymbol p} = [({\boldsymbol \Phi} \otimes {\boldsymbol \Phi}) {\boldsymbol \Psi}_{\rm s}]^\dagger {\boldsymbol r}_{\boldsymbol y}.
\end{equation}
{Computing this least squares solution costs $\mathcal{O}(K^2N^2)$~\cite{golub1996matrix}. Although for the non-parametric approach, cost of computing \eqref{eq:ls_gspectral} is on the same order as that of the uncompressed case, the cost reduction will be prominent for problems discussed later on in Section~\ref{sec:parametric}. Further, to compute \eqref{eq:ls_gspectral}, we have assumed that the true covariance matrix $\R_\y$ is available, but a practical scenario with finite data records is discussed in Section~\ref{sec:finitedata}.} 

\begin{mydef} A wide matrix $\Phib$ is a valid graph covariance subsampler if it yields a full column rank matrix 
$({\boldsymbol \Phi} \otimes {\boldsymbol \Phi}) {\boldsymbol \Psi}_{\rm s}$.
\end{mydef}
We now derive the conditions under which ${\boldsymbol \Phi}$ is a valid graph covariance subsampler. To do this, we first introduce two important lemmas.
\begin{mylem} 
\label{lem:Psifullrank}
Since the matrix $\U \in \mathbb{C}^{N \times N}$ is full rank, the matrix $\Psib_{\rm s} = \bar{\U} \circ \U$ of size ${N^2 \times N}$ has full column rank.
\end{mylem}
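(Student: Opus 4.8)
The plan is to prove full column rank by showing directly that $\Psib_{\rm s}$ has a trivial null space. Recall that the $i$th column of $\Psib_{\rm s}$ is $\bar{\u}_i \otimes \u_i = \vec(\u_i\u_i^H) = \vec(\Q_i)$. Hence, for any coefficient vector $\boldsymbol{c} = [c_1,\ldots,c_N]^T$,
\[
\Psib_{\rm s}\boldsymbol{c} = \sum_{i=1}^N c_i\,(\bar{\u}_i \otimes \u_i) = \vec\!\left(\sum_{i=1}^N c_i\,\u_i\u_i^H\right).
\]
Since $\vec(\cdot)$ is a linear isomorphism between $\mathbb{C}^{N\times N}$ and $\mathbb{C}^{N^2}$, the condition $\Psib_{\rm s}\boldsymbol{c} = \boldsymbol{0}$ is equivalent to $\sum_{i=1}^N c_i\,\u_i\u_i^H = \boldsymbol{0}$. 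So the whole lemma reduces to establishing the linear independence of the rank-one matrices $\{\Q_i = \u_i\u_i^H\}_{i=1}^N$ in the space of $N\times N$ matrices.

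The key step I would use is the observation that this linear combination of outer products is nothing but an eigendecomposition-style factorization read backwards, namely $\sum_{i=1}^N c_i\,\u_i\u_i^H = \U\,{\rm diag}[\boldsymbol{c}]\,\U^H$. Once this identity is in hand, the conclusion is immediate: because $\U \in \mathbb{C}^{N\times N}$ is full rank (indeed, it is the unitary eigenvector matrix of the Hermitian shift $\Sb$), both $\U$ and $\U^H$ are invertible, so
\[
\U\,{\rm diag}[\boldsymbol{c}]\,\U^H = \boldsymbol{0} \;\Longrightarrow\; {\rm diag}[\boldsymbol{c}] = \U^{-1}\,\boldsymbol{0}\,\U^{-H} = \boldsymbol{0} \;\Longrightarrow\; \boldsymbol{c} = \boldsymbol{0}.
\]
Thus the only vector in the null space of $\Psib_{\rm s}$ is $\boldsymbol{0}$, which is precisely full column rank of the $N^2 \times N$ matrix $\Psib_{\rm s}$.

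I do not expect any genuine obstacle here; the entire content of the argument is the reformulation of the null-space equation $\Psib_{\rm s}\boldsymbol{c} = \boldsymbol{0}$ as $\U\,{\rm diag}[\boldsymbol{c}]\,\U^H = \boldsymbol{0}$, after which invertibility of $\U$ does all the work, and notably only \emph{invertibility} of $\U$ is needed rather than any rank inequality for Khatri--Rao products. A more hands-on alternative would be to left- and right-multiply $\sum_i c_i\,\u_i\u_i^H = \boldsymbol{0}$ by $\u_j^H$ and $\u_j$ and invoke orthonormality $\u_j^H\u_i = \delta_{ij}$ to read off $c_j = 0$ for each $j$; this also works but relies on the stronger unitarity of $\U$, so I would present the diagonalization argument above as the primary route.
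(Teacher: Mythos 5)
Your proof is correct, and it takes a genuinely cleaner route than the paper's. The paper argues by contradiction directly on the block structure of the Khatri--Rao columns: assuming $\sum_i b_i(\bar{\u}_i \otimes \u_i) = {\boldsymbol 0}$ with not all coefficients zero, it reads off the $j$th block as $\sum_i b_i \bar{u}_{i,j}\u_i = {\boldsymbol 0}$, so that linear independence of the columns of $\U$ forces $b_i \bar{u}_{i,j} = 0$ for all $i,j$; a nonzero $b_i$ would then force $\u_i = {\boldsymbol 0}$, i.e., $\U$ singular, a contradiction. You instead package the entire null-space condition as a single matrix equation via $\Psib_{\rm s}\boldsymbol{c} = {\rm vec}(\U\,{\rm diag}[\boldsymbol{c}]\,\U^H)$ --- which is precisely the vectorization property ${\rm vec}({\boldsymbol A}\,{\rm diag}[{\boldsymbol b}]\,{\boldsymbol C}) = ({\boldsymbol C}^T \circ {\boldsymbol A}){\boldsymbol b}$ that the paper itself lists in its notation section, specialized to ${\boldsymbol A} = \U$, ${\boldsymbol C} = \U^H$ --- and then invertibility of $\U$ cancels both factors to give ${\rm diag}[\boldsymbol{c}] = {\boldsymbol 0}$. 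What your route buys: no contradiction scaffolding, no coordinate bookkeeping, and the paper's rather terse step (``this is possible only if $\U$ is singular'') is replaced by the transparent cancellation of invertible factors; it also makes explicit why the coefficients in the expansion $\R_\x = \sum_i p_i \Q_i$ are unique, which is how the lemma is actually used. Both arguments rely only on full rank of $\U$, not unitarity, so both prove the lemma at its stated generality; only your secondary orthonormality-based variant ($c_j = \u_j^H(\sum_i c_i \u_i\u_i^H)\u_j$) needs the stronger hypothesis, and you are right to relegate it to a remark.
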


\begin{proof}
See Appendix~\ref{app:proofselfkr}.
\end{proof}

\begin{mylem}
\label{lem:rankKron}
 If the matrix $\Phib \in \mathbb{R}^{K \times N}$ has full row rank, then the matrix $ \Phib \otimes \Phib$ of size $K^2 \times N^2$ also has full row rank.
\end{mylem}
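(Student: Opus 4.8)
The plan is to establish the standard fact that rank is multiplicative under the Kronecker product, that is, ${\rm rank}(\Phib \otimes \Phib) = [{\rm rank}(\Phib)]^2$, and then observe that the full row rank hypothesis (${\rm rank}(\Phib) = K$) forces $\Phib \otimes \Phib$, a matrix with exactly $K^2$ rows, to have rank $K^2$, hence full row rank.

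Rather than invoking multiplicativity as a black box, I would prove it directly and cleanly via the Gram matrix. The key observation is that $\Phib$ having full row rank is equivalent to the $K \times K$ Gram matrix $\Phib\Phib^T$ being nonsingular (in fact positive definite). First I would form the Gram matrix of $\Phib \otimes \Phib$ and simplify it using the mixed-product property $(\A \otimes \B)(\boldsymbol{C} \otimes \boldsymbol{D}) = (\A\boldsymbol{C}) \otimes (\B\boldsymbol{D})$ together with the transpose identity $(\Phib \otimes \Phib)^T = \Phib^T \otimes \Phib^T$, yielding
\[
(\Phib \otimes \Phib)(\Phib \otimes \Phib)^T = (\Phib\Phib^T) \otimes (\Phib\Phib^T).
\]
Next I would use the fact that the Kronecker product of two nonsingular matrices is nonsingular; this follows because $(\boldsymbol{M} \otimes \boldsymbol{M})^{-1} = \boldsymbol{M}^{-1} \otimes \boldsymbol{M}^{-1}$, or equivalently because the eigenvalues of $\boldsymbol{M} \otimes \boldsymbol{M}$ are the pairwise products of the eigenvalues of $\boldsymbol{M}$, all nonzero when $\boldsymbol{M}$ is nonsingular. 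Applying this with $\boldsymbol{M} = \Phib\Phib^T$ shows that the $K^2 \times K^2$ Gram matrix $(\Phib \otimes \Phib)(\Phib \otimes \Phib)^T$ is nonsingular. Since a real matrix and its Gram matrix share the same rank, ${\rm rank}(\Phib \otimes \Phib) = K^2$, and because $\Phib \otimes \Phib$ has precisely $K^2$ rows, it has full row rank.

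I do not anticipate a serious obstacle, as every ingredient is an elementary identity for the Kronecker product; the only care required is to apply the mixed-product and transpose identities in the correct order and to confirm that nonsingularity genuinely transfers from $\Phib\Phib^T$ to its self-Kronecker product. As an alternative route that sidesteps the Gram-matrix step, one could substitute the reduced singular value decomposition $\Phib = \boldsymbol{U}_\Phi \boldsymbol{\Sigma}_\Phi \boldsymbol{V}_\Phi^H$ and write $\Phib \otimes \Phib = (\boldsymbol{U}_\Phi \otimes \boldsymbol{U}_\Phi)(\boldsymbol{\Sigma}_\Phi \otimes \boldsymbol{\Sigma}_\Phi)(\boldsymbol{V}_\Phi \otimes \boldsymbol{V}_\Phi)^H$; since Kronecker products of matrices with orthonormal columns again have orthonormal columns, the rank equals the number of nonzero diagonal entries of $\boldsymbol{\Sigma}_\Phi \otimes \boldsymbol{\Sigma}_\Phi$, which is $K^2$. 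Either route delivers the claim.
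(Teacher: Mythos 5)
Your proposal is correct, and your primary argument takes a genuinely different route from the paper's. The paper's proof is a one-liner that invokes the singular value decomposition of $\Phib$ together with the mixed-product property $(\A \otimes \B)(\boldsymbol{C} \otimes \boldsymbol{D}) = (\A\boldsymbol{C} \otimes \B\boldsymbol{D})$ --- essentially the ``alternative route'' you sketch at the end, where $\Phib \otimes \Phib = (\U_\Phi \otimes \U_\Phi)(\boldsymbol{\Sigma}_\Phi \otimes \boldsymbol{\Sigma}_\Phi)(\boldsymbol{V}_\Phi \otimes \boldsymbol{V}_\Phi)^H$ exhibits the rank as the number of nonzero entries of $\boldsymbol{\Sigma}_\Phi \otimes \boldsymbol{\Sigma}_\Phi$. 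Your main argument instead replaces the spectral decomposition with the Gram-matrix characterization of full row rank: the identity $(\Phib \otimes \Phib)(\Phib \otimes \Phib)^T = (\Phib\Phib^T) \otimes (\Phib\Phib^T)$, nonsingularity of Kronecker products of nonsingular matrices, and the fact that a real matrix and its Gram matrix have equal rank. This is more elementary --- it needs no existence theorem for the SVD, only the mixed-product and transpose identities plus invertibility --- and every step is a finite algebraic verification; it also makes explicit why full row rank (rather than just a rank count) is the natural hypothesis, since it is exactly what makes $\Phib\Phib^T$ positive definite. The paper's SVD route, by contrast, yields the more general statement ${\rm rank}(\A \otimes \B) = {\rm rank}(\A)\,{\rm rank}(\B)$ for arbitrary (not necessarily full-rank) factors with no extra work, which is why it can be stated in a single sentence. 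Both arguments are sound for the matrix at hand, since $\Phib$ in this paper is a real selection matrix.
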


\begin{proof}
Follows from the singular value decomposition of $\Phib$ and the property $({\boldsymbol A} \otimes {\boldsymbol B}) ({\boldsymbol C} \otimes  {\boldsymbol D})  = ({\boldsymbol A}{\boldsymbol C} \otimes {\boldsymbol B}{\boldsymbol D})$.
\end{proof}

Using the above two lemmas, we can provide the necessary and sufficient conditions under which the solution in \eqref{eq:ls_gspectral} is unique.
\begin{mytheo}
\label{theo:fullrank_spectral}
A full row rank matrix $\Phib \in \mathbb{R}^{K \times N}$ is a valid graph covariance subsampler if and only if
the matrix $({\boldsymbol \Phi} \otimes {\boldsymbol \Phi}) {\boldsymbol \Psi}_{\rm s}$ is tall, i.e., $K^2 \geq N$, 
and ${\rm null}(\Phib \otimes \Phib) \cap {\rm ran}(\Psib_s) = \emptyset$. 

\end{mytheo}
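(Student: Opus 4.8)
The plan is to reduce the full-column-rank claim to a statement purely about the intersection ${\rm null}(\Phib \otimes \Phib) \cap {\rm ran}(\Psib_{\rm s})$, using the two preceding lemmas as the workhorses. Recall that, by definition, $\Phib$ being a valid graph covariance subsampler means that $(\Phib \otimes \Phib)\Psib_{\rm s}$ (of size $K^2 \times N$) has full column rank $N$. Since full column rank is equivalent to a trivial right null space, I would start from the characterization
\[
(\Phib \otimes \Phib)\Psib_{\rm s}\p = {\boldsymbol 0} \iff \Psib_{\rm s}\p \in {\rm null}(\Phib \otimes \Phib),
\]
and observe that $\Psib_{\rm s}\p$ always lies in ${\rm ran}(\Psib_{\rm s})$. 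Thus the solutions of $(\Phib \otimes \Phib)\Psib_{\rm s}\p = {\boldsymbol 0}$ are precisely those $\p$ whose image $\Psib_{\rm s}\p$ lives in ${\rm null}(\Phib \otimes \Phib) \cap {\rm ran}(\Psib_{\rm s})$. This is the single algebraic fact on which both directions hinge.

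For the ``if'' direction, assuming the intersection is trivial, any $\p$ in the null space of the product forces $\Psib_{\rm s}\p = {\boldsymbol 0}$; Lemma~\ref{lem:Psifullrank} guarantees that $\Psib_{\rm s}$ has full column rank, so $\p = {\boldsymbol 0}$ and the product has full column rank. For the ``only if'' direction, I would take any $v \in {\rm null}(\Phib \otimes \Phib) \cap {\rm ran}(\Psib_{\rm s})$, write $v = \Psib_{\rm s}\p$, note that $(\Phib \otimes \Phib)\Psib_{\rm s}\p = {\boldsymbol 0}$, and conclude $\p = {\boldsymbol 0}$ hence $v = {\boldsymbol 0}$ from the assumed full column rank of the product. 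The tallness $K^2 \geq N$ is the elementary necessary condition for any $K^2 \times N$ matrix to attain column rank $N$.

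What remains is to tie the tallness to the intersection condition cleanly. Here I would invoke Lemma~\ref{lem:rankKron}: since $\Phib \otimes \Phib$ has full row rank $K^2$, its null space has dimension exactly $N^2 - K^2$, while ${\rm ran}(\Psib_{\rm s})$ has dimension $N$ by Lemma~\ref{lem:Psifullrank}. A dimension count in $\mathbb{C}^{N^2}$ then shows that whenever $K^2 < N$ the two subspaces must intersect nontrivially, so the trivial-intersection requirement already enforces $K^2 \geq N$; listing both conditions is thus consistent and simply makes the tallness explicit.

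I do not anticipate a serious obstacle, since the argument is standard range/null-space bookkeeping once the two lemmas supply full column rank of $\Psib_{\rm s}$ and full row rank of $\Phib \otimes \Phib$. The only points requiring mild care are, first, the notational convention that ${\rm null}(\Phib \otimes \Phib) \cap {\rm ran}(\Psib_{\rm s}) = \emptyset$ must be read as the intersection being the zero subspace (the zero vector always belongs to both), and second, ensuring the full-column-rank $\Leftrightarrow$ trivial-null-space equivalence is applied to the correct (right) null space of the composite matrix rather than to that of $\Phib \otimes \Phib$ alone.
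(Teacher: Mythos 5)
Your proof is correct and takes essentially the same route as the paper: both reduce full column rank of $(\Phib \otimes \Phib)\Psib_{\rm s}$ to the triviality of ${\rm null}(\Phib \otimes \Phib) \cap {\rm ran}(\Psib_{\rm s})$, leaning on Lemma~\ref{lem:Psifullrank} (full column rank of $\Psib_{\rm s}$) and Lemma~\ref{lem:rankKron} (full row rank of $\Phib \otimes \Phib$). The only difference is one of completeness rather than substance: the paper invokes the rank-of-product identity ${\rm rank}(\A\B) \leq \min\{{\rm rank}(\A),{\rm rank}(\B)\}$ (with the equality condition) as a citation and spells out only the sufficiency direction, whereas you prove the equivalence from first principles, treat both directions of the ``if and only if,'' and note via the dimension count $\dim{\rm null}(\Phib\otimes\Phib) + \dim{\rm ran}(\Psib_{\rm s}) = (N^2-K^2)+N$ that the trivial-intersection condition already forces $K^2 \geq N$ --- a tidier account of why the tallness condition is listed separately.
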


\begin{proof}
See Appendix~\ref{app:theo1}.
\end{proof}

Although the linear system of equations \eqref{eq:subsamCovVec} can be solved using (unconstrained) least squares, {nonnegativity constraints or} any spectral prior can be easily accounted for while solving \eqref{eq:subsamCovVec} as summarized in the following remark. 

\begin{myrem}[Spectral priors] 
\label{rem:cls_prior}
Any available prior information about the graph spectrum might allow for a higher compression with $K^2 < N$, or an improvement of the solution \eqref{eq:ls_gspectral}. Suppose we have some prior knowledge about the graph spectrum, i.e., $\p \in \mathcal{P}$ with $\mathcal{P}$ being the constraint set. For instance, suppose we know a priori that (a) the spectrum is bandlimited (e.g., lowpass) with known support such that $\mathcal{P} = \{ \p \, |\, p_n = 0, n \notin [N_l, N_u] \}$, where $[N_l,N_u]$ denotes the support set, (b) the spectrum is sparse, but with unknown support such that $\mathcal{P} := \{ \p \, | \, \sum_{n=1}^{N} p_n = S\}$, where $S$ denotes the sparsity order {(here, we use the convex relaxation of the cardinality constraint), or (c) the power spectrum is nonnegative (by definition), for which $\mathcal{P} := \{ \p \, | \, p_n \geq0, \forall n\}$.} With such spectral priors, the following constrained least squares problem may be solved
\[
\underset{\p \in \mathcal{P}}{\text{\rm minimize}} \quad \| \r_\y - (\Phib \otimes \Phib) \Psib_{\rm s} \p\|_2^2.
\]
\end{myrem}

In what follows, we will discuss and illustrate the connections with compressive covariance sensing~\cite{ariananda2012compressive,Romero16CCSspm} for datasets that reside on regular structures (e.g., time series) using a circulant graph (e.g., a cycle graph). We will also see that designing a compression matrix is much more elegant for such circulant graphs.

\section{Circulant Graphs} \label{sec:cyclicgraphs}

Discrete-time finite or periodic data can be represented using \emph{directed} cycle graphs, where the direction of the edge represents the evolution of time from past to future. The edge directions may be ignored {in some cases, e.g., when we are only interested in exploiting the regular Fourier transform, when we are dealing with the spatial domain, or when the underlying data is a time-reversible stochastic process that is invariant under the reversal of the time scale~\cite{weiss1975time}}. 
In such cases, the data can be represented using an \emph{undirected} cycle graph, see Fig.~\ref{fig:cyclegraph}. 
\begin{figure}[!t]
\centering

\begin{tikzpicture}
\def \radius {1.25cm}
\def \margin {25} 

\node[draw, circle, fill=gray] at ({360/4 * (0)}:\radius) {$x_3$};
\node[draw, circle, fill=gray] at ({360/4 * (1)}:\radius) {$x_2$};
\node[draw, circle, fill=gray] at ({360/4 * (2)}:\radius) {$x_1$};
\node[draw, circle, fill=gray] at ({360/4 * (3)}:\radius) {$x_N$};

\draw[ thick,-, >=latex] ({360/4 * (1 - 1)+\margin}:\radius) 
    arc ({360/4 * (1 - 1)+\margin}:{360/4 * (1)-\margin}:\radius);
\draw[ thick, -, >=latex] ({360/4 * (2 - 1)+\margin}:\radius) 
    arc ({360/4 * (2 - 1)+\margin}:{360/4 * (2)-\margin}:\radius);
 \draw[thick, >=latex] ({360/4 * (3 - 1)+\margin}:\radius) 
    arc ({360/4 * (3 - 1)+\margin}:{360/4 * (3)-\margin}:\radius);
\draw[ very thick, dotted, -, >=latex] ({360/4 * (4 - 1)+\margin}:\radius) 
    arc ({360/4 * (4 - 1)+\margin}:{360/4 * (4)-\margin}:\radius);

\end{tikzpicture}
\caption{Undirected cycle graph. The graph covariance matrix of stationary signals $\{x_n\}_{n=1}^N$ supported on this undirected cycle graph will be a circulant matrix.}
\label{fig:cyclegraph}
\end{figure}
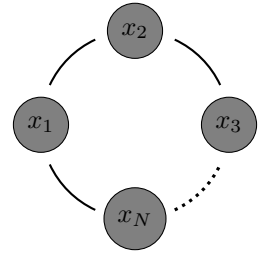

Consider the adjacency matrix of this undirected cycle graph as its graph-shift operator, which 
will be an $N \times N$ symmetric circulant matrix. We know that a circulant matrix can be diagonalized with a discrete Fourier transform matrix. In other words, the graph Fourier transform matrix $\U$ related to this graph will consist of the orthonormal vectors 
\[
\u_n = [\om_n^0, \om_n, \om_n^2,\cdots,\om_n^{N-1}]^T
\]
with $\om_n = \exp(-\imath2\pi n/N)/\sqrt{N}$ and it will be a Vandermonde matrix {(here, $\imath^2 = -1$)}. In general, for {\it circulant graphs} with circulant graph-shift operators, an eigenvalue decomposition is not required to compute the graph Fourier transform matrix $\U$ or the model matrix $\Psib_{\rm s}$, which was introduced in Section~\ref{sec:subsampling}.


Let the set $\mathcal{K} \subset \mathcal{N}$ denote the indices of the selected graph nodes. Now, if we can smartly select the entries of ${\u}_n$ such that the related entries of $\bar{\u}_n \otimes \u_n$ contain all the distinct values $\{\om_n^m\}$ for $m = 0,\cdots,N-1$, the matrix $(\Phib \otimes \Phib)\Psib_{\rm s}$ will be a full-column rank Vandermonde matrix.  In particular, this means that, for every $m=0,\ldots,N-1$, there must exist at least one pair of elements $n_i, n_j \in \mathcal{K}$ that satisfies $n_i - n_j = m$, { where the difference $n_i-n_j$ is due to the Kronecker product $\bar{\u}_n \otimes \u_n$}.  Sets $\mathcal{K}$ having this property are called {\it sparse rulers}~\cite{Romero16CCSspm}. Furthermore, if the set contains a minimum number of elements, they are called \emph{minimal sparse rulers}, which results in the best possible compression. 

Let us illustrate this with an example for $N=10$. In this case, the set $\mathcal{K} = \{0,1,4,7,9\}$ with $K = |\mathcal{K}| = 5$ elements is a minimal sparse ruler. In other words, by choosing the subsampling matrix $\Phib = \diag_{\rm r}[{\boldsymbol w}]$ with ${\boldsymbol w} = [1, 1, 0, 0, 1, 0, 0, 1, 0, 1]^T$
we can ensure that the matrix $({\boldsymbol \Phi} \otimes {\boldsymbol \Phi}) {\boldsymbol \Psi}_{\rm s}$ is full column rank, and hence the second-order statistics of $\x$ can be estimated using \eqref{eq:ls_gspectral} by subsampling only $K=5$ nodes. Here, we achieve a compression rate of $K/N = 0.5$. Similarly, for $N=80$, the minimal sparse ruler has $K=15$ elements, and this results in a compression rate of $K/N=0.1875$ {(we will see an example related to $N=80$ and $K=15$ in Section~\ref{sec:numericalresults})}. Sparse rulers for other values of $N$ are tabulated in~\cite{linebarger1993difference}. 

Computing minimal sparse rulers is a combinatorial problem with no known expressions. Nevertheless, subsamplers such as coprime~\cite{vaidyanathan2011sparse} and nested sparse samplers~\cite{pal2010nested}, which can be computed using a closed-form expression for any $N$, are also valid covariance subsamplers. However, they are not minimal sparse rulers and thus they do not provide the best compression rate.

Subsampler design for reconstructing the second-order statistics of signals residing on a circulant graph is as elegant as that for reconstructing the second-order statistics of stationary time-series. The design of subsamplers for general graphs, however, is more challenging. This is the subject of Section~\ref{sec:subsampdesgn}.

\section{Parameteric Models} \label{sec:parametric}

In this section, we will focus on a parametric representation of the graph power spectrum. In particular, the focus will be on \emph{moving average} and \emph{autoregressive} parametric models. Typically, the model order (i.e., the number of parameters) is much smaller than the length of the graph signal, and since we now have to recover only these parameters, a much stronger compression can be achieved. {Also, this means that, we need to store or transmit only fewer parameters, which could be used to generate realizations of second-order stationary graph signals (we will illustrate this with an example in Section~\ref{sec:numericalresults})}

Parametric methods can be viewed as an alternative approach, where going to the graph spectral domain may be avoided, and instead, all the processing is done directly in the graph vertex domain. 

\subsection{Graph moving average models}\label{sec:GraphVertexDomain}

As before, we assume that the stationary graph signal $\x$ is generated by graph filtering zero-mean {unit-variance} white noise. { Recall that  in Section~\ref{sec:subsampling}, we did not impose any structure to the graph filter, but now we will assume that the graph filter has a finite impulse response with an all-zero form as in \eqref{eq:graph_fitler}; see~\cite{perraudin2016stationary,marques2016stationary}.} 

Let us begin by writing the graph signal $\x$ as
\[
\x = \H(\h)\n=\sum_{l=0}^{L-1} h_l \Sb^l \n = {\boldsymbol U}\left(\sum_{l=0}^{L-1} h_l{\boldsymbol \Lambda}^l\right){\boldsymbol U}^H\n
\]
with covariance matrix
\begin{equation}
\begin{aligned}
\label{eq:RxMA}
\R_\x &= {\H}(\h){\H}^H(\h) \\ 
&= {\boldsymbol U}\left(\sum_{l=0}^{L-1}  h_l{\boldsymbol \Lambda}^l\right) \left(\sum_{l=0}^{L-1}  \bar{h}_l{\boldsymbol \Lambda}^l\right){\boldsymbol U}^H,
\end{aligned}
\end{equation}
where $\x$ is a {\it moving average graph {signal}} ({\it G-MA}) of order $L-1$ with {\it G-MA} coefficients $\{h_k\}_{k=0}^{L-1}$, and the length-$L$ vector $\h$ collects the {\it G-MA} coefficients as $\h = [h_0, h_1,\ldots,h_{L-1}]^T$. { Moving average models are particularly useful to represent a smooth graph power spectrum~\cite{perraudin2016stationary,marques2016stationary}.}

The expression \eqref{eq:RxMA} basically means that we can express the covariance matrix ${\boldsymbol R}_{\boldsymbol x}$ as a polynomial of the graph shift operator:
\begin{equation}
\label{eq:Rxpolynom}
{\boldsymbol R}_{\boldsymbol x} =  \sum_{k=0}^{Q-1} b_k {\boldsymbol S}^k,
\end{equation}
where $Q=\min\{2L-1,N\}$ unknown expansion coefficients $\{b_k\}_{k=0}^{Q-1}$ collected in the vector ${\boldsymbol b} = [b_0,b_1,\cdots,b_{Q-1}]^T \in \mathbb{R}^Q$ completely characterize the covariance matrix $\R_\x$. In other words, we assume a 
linear parametrization of the covariance matrix ${\boldsymbol R}_{\boldsymbol x}$ using the set of $Q$ Hermitian matrices $\{{\boldsymbol S}^0,{\boldsymbol S},\cdots,{\boldsymbol S}^{Q-1}\}$ as a basis. 

{
The expansion coefficients $\b$ depend on the G-MA coefficients $\h$. To see this, let us consider an example {\it G-MA} model with $L=3$ having coefficients $\h = [h_0,h_1,h_2]^T$, for which \eqref{eq:Rxpolynom} simplifies to
\begin{align}
\R_\x =  h_0^2 {\boldsymbol I} &+ 2 h_0h_1 \Sb + (h_1^2 + 2h_0h_2) \Sb^2 \nonumber \\ 
&\>+ 2h_1h_2  \Sb^3 + h_2^2 \Sb^4. \label{eq:MAexample}
\end{align}
This means that, $\b(\h)$ will  be of length $2L-1$ with entries $\b(\h) = [h_0^2, 2h_0h_1,h_1^2 + 2h_2h_0,2h_2h_1, h_2^2]^T$ that are related to the G-MA parameters $\h$. To arrive a {\it simple} (unconstrained) least squares estimator, we will ignore this structure in $\b$ (we will discuss the how to account for this structure at the end of this subsection). Therefore, with a slight abuse of notation we will henceforth refer to $\b(\h)$ as the {\it G-MA} coefficients.    

}

Depending on the shape of the power spectrum, $Q$ can be much smaller than the number of graph nodes (i.e., the length of the vector ${\boldsymbol p}$) thus allowing a higher compression. In any case, the value of $Q$ will be at most $N$, recalling that $N$ is the degree of the minimal {(and characteristic)} polynomial of $\Sb$. That is to say, for $Q \geq N$, the set of matrices $\{{\boldsymbol S}^0,{\boldsymbol S},\cdots,{\boldsymbol S}^{Q-1}\}$ are linearly  dependent. 

Vectorizing ${\boldsymbol R}_{\boldsymbol x}$ in \eqref{eq:Rxpolynom} yields
\begin{equation}
\label{eq:RxpolynomVec}
{\boldsymbol r}_{\boldsymbol x} = {\rm vec}({\boldsymbol R}_{\boldsymbol x}) = \sum_{k=0}^{Q-1} b_k {\rm vec}({\boldsymbol S}^q) = {\boldsymbol \Psi}_{\rm MA}{\boldsymbol b},
\end{equation}
where we have stacked ${\rm vec}({\boldsymbol S}^q)$ to form the columns of the matrix ${\boldsymbol \Psi}_{\rm MA} \in \mathbb{R}^{N^2 \times Q}$ as
{
\[
{\boldsymbol \Psi}_{\rm MA} = \left[{\rm vec}({\boldsymbol S}^0),{\rm vec}({\boldsymbol S}^1),\cdots,{\rm vec}({\boldsymbol S}^{Q-1})\right],
\]
}
and the subscript ``${\rm MA}$" in ${\boldsymbol \Psi}_{\rm MA}$ stands for {\it moving average}.

The covariance matrix of the subsampled graph {signal} $\y$ in \eqref{eq:subsampling} will then be
\begin{equation}
\label{eq:SubRyMA}
{\boldsymbol R}_{\boldsymbol y} =  {\boldsymbol \Phi}{\boldsymbol R}_{\boldsymbol x}{\boldsymbol \Phi}^T = \sum_{k=0}^{Q-1} b_k \Phib\Sb^k\Phib^T.
\end{equation}
As in the graph spectral domain approach discussed in Section~\ref{sec:subsampling}, the G-MA coefficients $\{b_k\}_{k=0}^{Q-1}$ of $\R_\y$ with respect to the set $\{\Phib{\boldsymbol S}^0\Phib^T,\Phib{\boldsymbol S}\Phib^T,\cdots,\Phib{\boldsymbol S}^{Q-1}\Phib^T\}$ are the {\it same} as those of $\R_\x$ with respect to the set  $\{{\boldsymbol S}^0,{\boldsymbol S},\cdots,{\boldsymbol S}^{Q-1}\}$.

Vectorizing $\R_\y$, we get a set of $K^2$ equations in $Q$ unknowns, given by 
\begin{equation}
\label{eq:ry_vertex}
\begin{aligned}
{\boldsymbol r}_{\boldsymbol y} = {\rm vec}({\boldsymbol R}_{\boldsymbol y}) &= ({\boldsymbol \Phi} \otimes {\boldsymbol \Phi}) {\rm vec}({\boldsymbol R}_{\boldsymbol x})\\
&= ({\boldsymbol \Phi} \otimes {\boldsymbol \Phi}){\boldsymbol \Psi}_{\rm MA}{\boldsymbol b}. 
\end{aligned}
\end{equation}
If the matrix $({\boldsymbol \Phi} \otimes {\boldsymbol \Phi}){\boldsymbol \Psi}_{\rm MA}$ has full column rank, which requires $K^2 \geq Q$, then the overdetermined system \eqref{eq:ry_vertex} can be uniquely solved using least squares as
\begin{equation}
\label{eq:MAls}
\widehat{\boldsymbol b} = [({\boldsymbol \Phi} \otimes {\boldsymbol \Phi}){\boldsymbol \Psi}_{\rm MA}]^\dag {\boldsymbol r}_{\boldsymbol y}.
\end{equation}

\begin{mycor} A full row rank matrix $\Phib \in \mathbb{R}^{K \times N}$ is a valid graph covariance subsampler if and only if the matrix $({\boldsymbol \Phi} \otimes {\boldsymbol \Phi}) {\boldsymbol \Psi}_{\rm MA}$ is tall, i.e., $K^2 \geq Q$, and ${\rm null}(\Phib \otimes \Phib) \cap {\rm ran}(\Psib_{\rm MA}) = \emptyset$. 
\end{mycor}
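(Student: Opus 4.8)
The plan is to transcribe the proof of Theorem~\ref{theo:fullrank_spectral} almost verbatim, since this Corollary is the exact moving-average counterpart of that result with $\Psib_{\rm s}$ replaced by $\Psib_{\rm MA}$ and the ambient column dimension $N$ replaced by $Q$. Recall that here ``valid graph covariance subsampler'' means that $(\Phib \otimes \Phib)\Psib_{\rm MA}$ has full column rank, so that the G-MA coefficients $\b$ are uniquely recoverable via the least squares solution \eqref{eq:MAls}. The only genuinely new ingredient needed is the analog of Lemma~\ref{lem:Psifullrank}, namely that $\Psib_{\rm MA}$ has full column rank; Lemma~\ref{lem:rankKron} carries over unchanged and guarantees that $\Phib \otimes \Phib$ has full row rank whenever $\Phib$ does. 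Once both rank facts are in hand, the characterization of full column rank of the composite follows by the same null-space argument.

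First I would prove that $\Psib_{\rm MA}$ has full column rank. Its columns are ${\rm vec}(\Sb^0),\ldots,{\rm vec}(\Sb^{Q-1})$, so linear independence of the columns is equivalent to linear independence of the matrices $\{\Sb^0,\Sb,\ldots,\Sb^{Q-1}\}$ in $\mathbb{C}^{N \times N}$. A nontrivial relation $\sum_{k=0}^{Q-1} c_k \Sb^k = {\boldsymbol 0}$ would be a nonzero polynomial in $\Sb$ of degree at most $Q-1 \le N-1$ that annihilates $\Sb$, which is impossible because $N$ is the degree of the minimal polynomial of $\Sb$ (equal to the characteristic polynomial, since the eigenvalues are distinct by assumption). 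Hence all $c_k$ vanish and $\Psib_{\rm MA}$ has full column rank $Q$. Note that $Q = \min\{2L-1,N\} \le N$ always holds, so this step is valid throughout. Equivalently, diagonalizing $\Sb = \U {\boldsymbol \Lambda} \U^H$ reduces the question to the full column rank of an $N \times Q$ Vandermonde matrix built from the $N$ distinct eigenvalues, which holds for $Q \le N$.

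With these two rank facts in place, I would complete the equivalence exactly as in Appendix~\ref{app:theo1}. For necessity, if $(\Phib \otimes \Phib)\Psib_{\rm MA}$ has full column rank then it cannot have more columns than rows, forcing $K^2 \ge Q$; moreover, for any $\b \neq {\boldsymbol 0}$ the vector $\Psib_{\rm MA}\b$ is nonzero (full column rank of $\Psib_{\rm MA}$) and is not annihilated by $\Phib \otimes \Phib$, so no nonzero element of ${\rm ran}(\Psib_{\rm MA})$ lies in ${\rm null}(\Phib \otimes \Phib)$, i.e., the intersection is trivial. For sufficiency, suppose $K^2 \ge Q$ and the intersection is trivial, and let $(\Phib \otimes \Phib)\Psib_{\rm MA}\b = {\boldsymbol 0}$; then $\Psib_{\rm MA}\b \in {\rm ran}(\Psib_{\rm MA}) \cap {\rm null}(\Phib \otimes \Phib) = \{{\boldsymbol 0}\}$, so $\Psib_{\rm MA}\b = {\boldsymbol 0}$, and full column rank of $\Psib_{\rm MA}$ then gives $\b = {\boldsymbol 0}$. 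Hence $(\Phib \otimes \Phib)\Psib_{\rm MA}$ has full column rank and $\Phib$ is a valid subsampler.

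I expect the main (and essentially only nontrivial) obstacle to be the first step, establishing full column rank of $\Psib_{\rm MA}$. Unlike Lemma~\ref{lem:Psifullrank}, which exploited the Khatri-Rao structure $\bar{\U} \circ \U$ together with the nonsingularity of $\U$ directly, here the columns are successive powers of $\Sb$, so the argument must instead invoke the degree of the minimal polynomial (equivalently, the distinctness of the eigenvalues). The remaining logic is a routine transcription of the Theorem~\ref{theo:fullrank_spectral} proof.
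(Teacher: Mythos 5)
Your proof is correct and takes essentially the same route as the paper: the paper's own proof of this corollary is literally ``Follows from Theorem~\ref{theo:fullrank_spectral}'', and your argument is that theorem's proof (Lemma~\ref{lem:rankKron} plus the null-space/range intersection argument) transcribed with $\Psib_{\rm MA}$ in place of $\Psib_{\rm s}$ and $Q$ in place of $N$. The one ingredient you make explicit---full column rank of $\Psib_{\rm MA}$, proved via the degree-$N$ minimal polynomial of $\Sb$, or equivalently via the factorization $\Psib_{\rm MA}=(\bar{\U}\circ\U){\boldsymbol V}_Q$ with a Vandermonde ${\boldsymbol V}_Q$---is exactly the analog of Lemma~\ref{lem:Psifullrank} that the paper leaves implicit, relying on its remark in Section~\ref{sec:parametric} that $Q\le N$ with $N$ the degree of the minimal (and characteristic) polynomial of $\Sb$.
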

\begin{proof}
Follows from Theorem~\ref{theo:fullrank_spectral}. 
\end{proof}
%
%
%
%
Although knowing the moving average filter coefficients ${\boldsymbol b}$ is equivalent to knowing $\R_\x$, it might be interesting to study the relation between ${\boldsymbol b}$ and the power spectrum $\p$. We can relate the vector ${\boldsymbol p}$ and the vector ${\boldsymbol b}$, by using \eqref{eq:graphPSD} and \eqref{eq:Rxpolynom}. That is, we can write $p_n = \sum_{k=0}^{Q-1} b_k {\lambda}_n^k$, or in matrix-vector form we have 
\[
{\boldsymbol p} = {\boldsymbol V}_Q {\boldsymbol b},
\] where ${\boldsymbol V}_Q$ is an $N \times Q$ Vandermonde matrix with $(i,j)$th entry equal to $\lambda_i^{j-1}$. To recover ${\boldsymbol p}$ from ${\boldsymbol b}$, however, we need all the $N$ eigenvalues of $\Sb$ to construct ${\boldsymbol V}_Q$. 

This relation between ${\boldsymbol p}$ and ${\boldsymbol b}$ can be used to show the equivalence between the linear models \eqref{eq:subsamCovVec} and \eqref{eq:ry_vertex} as follows. The fact that ${\boldsymbol S}^q = {\boldsymbol U}{\boldsymbol \Lambda}^q{\boldsymbol U}^H$ from \eqref{eq:graphShift} allows us to express 
${\boldsymbol \Psi}_{\rm MA}$ in \eqref{eq:ry_vertex} as ${\boldsymbol \Psi}_{\rm MA} = ({\bar{\boldsymbol U}} \circ  {\boldsymbol U}){\boldsymbol V}_Q.$
Using this in \eqref{eq:ry_vertex}, we obtain
${\boldsymbol r}_{\boldsymbol y} = ({\boldsymbol \Phi} \otimes {\boldsymbol \Phi}) ({\bar{\boldsymbol U}} \circ  {\boldsymbol U}) {\boldsymbol V}_Q {\boldsymbol b} = ({\boldsymbol \Phi}{\bar{\boldsymbol U}} \circ {\boldsymbol \Phi}{\boldsymbol U})  {\boldsymbol p}.$
%

In the following, we exploit the structure in $\b$, which we ignored while solving \eqref{eq:ry_vertex}, to develop a constrained least squares estimator.

{
\begin{myrem}[Constrained least squares]
\label{rem:cls_ma}
To reveal the structure in $\b(\h)$, let us recall the example \eqref{eq:MAexample} with $L=3$. The coefficients in $\b(\h)$ are related to the squared polynomial $p(t) = (h_0 + h_1t+h_2t^2)^2$, which can also be written as
\begin{equation*}
\begin{aligned}
p(t) &= \h^T \left[\begin{array}{ccc}1 & t & t^2 \\t & t^2 & t^3 \\t^2 & t^3 & t^4\end{array}\right] \h. 
\end{aligned}
\end{equation*}
The polynomial $p(t)$ can more generally be written as
\begin{equation*}
\begin{aligned}
p(t) &=  \h^T {\boldsymbol \Theta} \h =  \h^T \left[ \sum_{l=0}^{2L-2} t^l {\boldsymbol \Theta}_l  \right]\h 
= (\bar{\h} \otimes \h)^T {\boldsymbol M}^T {\boldsymbol t}
\end{aligned}
\end{equation*}
where the $L \times L$ Hankel matrix ${\boldsymbol \Theta}$ is related to the model order $L-1$, 
\[
{\boldsymbol \Theta}_l  = \left[\begin{array}{ccccc}{\bf 0} &   & 1 &   & {\bf 0} \\  & \iddots &   & \iddots &   \\1 &   & {\bf 0} &   &  \\  & \iddots &   & {\bf 0} &   \\{\bf 0} &   &   &   & \end{array}\right] 
\]
is an $L \times L$ matrix with ones on its $l$th {\it anti-diagonal} and zeros elsewhere (e.g., ${\boldsymbol \Theta}_0$ will have a one on its (1,1) entry and zeros elsewhere),
\[
{\boldsymbol M}^T = [\vec({\boldsymbol \Theta}_0) \cdots \vec({\boldsymbol \Theta}_{2L-2})] \in \mathbb{R}^{L^2 \times 2L-1},
\]
and ${\boldsymbol t}=[1,t,\cdots,t^{2L-2}]$ contains monomials up to order $2(L-1)$. This means that, we can write 
\[
\b(\h) = {\boldsymbol M} ({\bar{\h} \otimes \h}) = {\boldsymbol M} \vec({\h\h^H}),
\]
which together with \eqref{eq:ry_vertex} leads to the constrained least squares:
\[
\underset{{\h_{\rm kr} \in \mathbb{R}^{L^2}}}{\text{minimize}} \quad  \|\r_\y -  {\boldsymbol C}{\boldsymbol h}_{\rm kr}\|_2^2 \quad {\rm s.to} \quad \h_{\rm kr} = {\bar{\h} \otimes \h} 
\]
with ${\boldsymbol C} := ({\boldsymbol \Phi} \otimes {\boldsymbol \Phi}){\boldsymbol \Psi}_{\rm MA}{\boldsymbol M}$. The above least squares problem that accounts for the Kronecker structure in the unknowns can be solved using algebraic methods developed in~\cite{van1996analytical}, or by introducing a rank-1 matrix ${\H}_{\rm kr} = {\h}{\h}^H$ and then solve for $\H_{\rm kr}$ and $\h$ using standard rank relaxation techniques~\cite{luo2010semidefinite}.   
\end{myrem}
}

In sum, if the subsampling matrix $\Phib$ is carefully designed (subject of Section \ref{sec:subsampdesgn}), we can recover the moving average graph power spectrum of a length-$N$ graph signal by observing only $\mathcal{O}(\sqrt{Q})$ nodes.

\subsection{Graph autoregressive models}

A {\it graph autoregressive signal} ({\it G-AR}) of order $P$ may be generated by filtering zero-mean unit-variance white noise, ${\boldsymbol n}$, with an all-pole filter of the form~\cite{marques2016stationary}
\begin{equation}
\label{eq:GARnonlinear}
{\boldsymbol H}{^{-1}}({\boldsymbol \alpha}) = \prod_{k=1}^P ({\boldsymbol I} - \alpha_k {\boldsymbol S}),
\end{equation}
where the {\it G-AR} coefficients $\{\alpha_k\}_{k=1}^P$ are collected in the length-$P$ vector ${\boldsymbol \alpha}$. 
{ Such all-pole filters are useful to model, e.g., diffusion processes~\cite{marques2016stationary} and graph power spectra with sharp transitions.} 

The covariance matrix $\R_\x$ of the {\it G-AR} { signal}, $\x = \H({\boldsymbol \alpha})\n$, given by
\[
\R_\x = \H({\boldsymbol \alpha})\H^H({\boldsymbol \alpha}) \in \mathbb{C}^{{N} \times {N}},
\]
does not admit a linear parameterization in ${\boldsymbol \alpha}$ (unlike the moving average approach that we have seen earlier).
The subsampled covariance matrix $\R_\y \in \mathbb{C}^{{K} \times {K}}$ of the subsampled observations $\y = \Phib\x = \Phib \H({\boldsymbol \alpha})\n \in \mathbb{C}^{K}$, given by 
\[
\R_\y = \Phib \R_\x \Phib^T =    \Phib \H({\boldsymbol \alpha})\H^H({\boldsymbol \alpha})  \Phib^T. 
\]
is also non-linear in ${\boldsymbol \alpha}$. Consequently, vectorizing $\R_\y$ leads to a set of $K^2$ {\it non-linear} equations in $P$ unknowns
\begin{equation}
\label{eq:GARnonlinearvecto}
\r_\y = (\Phib \otimes \Phib) \r_\x = (\Phib \otimes \Phib) \vec({\H({\boldsymbol \alpha})\H^H({\boldsymbol \alpha})}).
\end{equation}
Solving this system of non-linear equations is {\it not trival} {(e.g., it has to be solved using iterative Newton's methods)}. Therefore, in what follows, we will develop a technique for {\it G-AR} modeling as well as for graph sampling so that the G-AR parameters can be recovered using {\it non-iterative} linear estimators.

The all-pole filter \eqref{eq:GARnonlinear} can be alternatively expressed as
\begin{equation}
\label{eq:GARfilter}
{{\H}^{-1}({\boldsymbol a}) =  {\boldsymbol I} - \sum_{k=1}^P a_k {\boldsymbol S}^k},
\end{equation}
where $\{a_k\}_{k=1}^P$ are the so-called {\it G-AR} parameters. Thus, the {\it {\it G-AR}} signal satisfies the equations
\begin{equation}
\label{eq:GARprocess}
{\boldsymbol x} = \sum_{k=1}^P a_k{\boldsymbol S}^k{\boldsymbol x}  + {\boldsymbol n}.
\end{equation}
In other words, the graph signal ${\boldsymbol x}$ depends {\it linearly} on the $P$-shifted graph signals $\{{\boldsymbol S}^k{\boldsymbol x}\}_{k=1}^{P}$ according to the above autoregressive model. So the covariance matrix of $\x$ can be expressed as
\begin{equation}
\label{eq:AR_uncomp}
\R_\x = \sum_{k=1}^P a_k{\boldsymbol S}^k\R_{\boldsymbol x} {+ \R_{\n\x}},
\end{equation}
which is also linear in the G-AR {parameters}, { and where $\R_{\n\x} = \mathbb{E}\{\n\x^H\}$ may be seen as an error term. Given the (uncompressed) observations, $\x$, the above linear model can be used to compute the G-AR coefficients using least squares.}

Let $\mathcal{N}_k(p)$ denote the set of nodes in the $p$-hop neighborhood of the $k$th node, i.e., 
{
\[
\mathcal{N}_k(p) := \{l \,\, | \,\, l \in \mathcal{N}, [{\boldsymbol S}^{p}]_{k,l} \neq 0\}.
\]
} 
Using this notation, we will now describe the specific subsampling scheme that we adopt for G-AR models, and we will explain later the advantage of this particular subsampling scheme. Suppose we observe $K_0$ graph nodes through a \emph{sparse} subsampling matrix ${\boldsymbol \Phi}_0 \in \{0,1\}^{K_0 \times N}$. Let us denote the set containing the indices of the subsampled nodes by $\mathcal{K}_0$ such that $|\mathcal{K}_0| = K_0$.  
Furthermore, we will also observe nodes in the $P$-hop neighborhood of those $K_0$ nodes through $\{\Phib_p\}_{p=1}^P$. More specifically, with $\Phib_p$ we observe nodes in the set $\mathcal{N}_{k}(p)$ for ${k \in \mathcal{K}_0}$ such that the matrix $\Phib_p$ will have $K_p := \sum_{k \in \mathcal{K}_0} |\mathcal{N}_k(p)|$ rows with ${ \Phib_p \in \{0,1\}^{K_p \times N}}$. Mathematically, the above subsampling scheme ${\y} = \Phib \x$ can be expressed as follows:
\[
{\y} =  [\Phib_0^T, \Phib_1^T, \cdots,\Phib_P^T]^T \x = [\y_0^{T},\y_1^{T},\ldots,\y_P^{T}]^T \in \mathbb{C}^K,
\]
where $\y$ is a vector of length $K = \sum_{l=0}^P K_l$, which is also the total number of observations we gather. This sampling scheme is inspired from~\cite{testa2016compressive}, and we extend it for reconstructing second-order statistics by recognizing the fact that the compressed observations (and their covariance matrices) satisfy the G-AR model.  For the sake of presentation, we make abstraction of the redundancies in the observations $\y$ that may arise due to the nonzero diagonal entries of the powers of the shift-operator or due to overlapping nodes within different neighborhoods. Note that the subsampling scheme for the {\it G-AR} model is different from the subsampling schemes discussed in Sections \ref{sec:subsampling} and \ref{sec:GraphVertexDomain} as we observe a subset of nodes and its related neighborhood as well. For example, suppose each node has degree $n$, then we acquire $\mathcal{O}(K_0[1+ n+ n^2+\cdots+n^P]) = \mathcal{O}(K_0 (1-n^{P+1})/(1-n))$ observations in total.

Using \eqref{eq:GARprocess}, we can express the observations $\y_0 = \Phib_0 \x$ as
\begin{equation}
\label{eq:GAR_sub}
\begin{aligned}
{\boldsymbol y}_0 &= \sum_{k=1}^P a_k {\boldsymbol \Phi}_0 {\boldsymbol S}^k{\boldsymbol x}  + {\boldsymbol \Phi}_0 {\boldsymbol n},\\
&= \sum_{k=1}^P a_k {\boldsymbol \Phi}_0 {\boldsymbol S}^k {\boldsymbol \Phi}_k^T {\boldsymbol y}_k  + {\boldsymbol \Phi}_0 {\boldsymbol n},
\end{aligned}
\end{equation}
where the second equality is due to the structure of the shift operator that operates (locally) on the neighboring nodes, and thus can be expressed via a column selection operation ${\boldsymbol \Phi}_k^T \in \{0,1\}^{N \times K_k}$. { Due to the choice of this particular subsampling scheme, the compressed observation $\y_0$ can be expressed as a linear combination of the compressed observations $\{\y_k\}_{k=1}^P$ with the G-AR parameters being the combining weights.}


By defining $\R_{p,q} = \mathbb{E}\{\y_p\y_q^H\} = \Phib_p \R_\x \Phib_q^T \in \mathbb{C}^{K_p \times K_q}$, we can express the covariance matrix $\R_{0,0}$ in terms of the available observations as
\begin{equation}
\label{eq:GAR_subcov}
\begin{aligned}
\R_{0,0} &= \Phib_0 \R_{\boldsymbol x} \Phib_0^T \\
&= \sum_{k=1}^P a_k {\boldsymbol \Phi}_0 {\boldsymbol S}^k \Phib_k^T\R_{k,0}  {+  {\boldsymbol \Phi}_0\R_{\n\x} {\boldsymbol \Phi}_0^T},
\end{aligned} 
\end{equation}
which on vectorizing leads to $K_0^2$ equations in $P$ unknowns given by
\begin{equation}
\label{eq:GAR_veccov}
\begin{aligned}
\r_{0,0} &= (\Phib_0 \otimes \Phib_0)  \vec(\R_{\boldsymbol x}) \\
&{\approx} \sum_{k=1}^P a_k \vec({\boldsymbol \Phi}_0 {\boldsymbol S}^k {\boldsymbol \Phi}_k^T\R_{k,0}) 
=\, {\boldsymbol G}_{0} \a
\end{aligned}
\end{equation}
where {$\approx$ is due to the error term.} Here, we have stacked $\vec({\boldsymbol \Phi}_0 {\boldsymbol S}^k {\boldsymbol \Phi}_k^T\R_{k,0})$ to form the columns of the matrix ${\boldsymbol G}_{0} \in \mathbb{R}^{K_0^2 \times P}$ { as 
\[
{\boldsymbol G}_{0} = [\vec({\boldsymbol \Phi}_0 {\boldsymbol S} {\boldsymbol \Phi}_1^T\R_{1,0}), \cdots, \vec({\boldsymbol \Phi}_0 {\boldsymbol S}^P {\boldsymbol \Phi}_P^T\R_{P,0})].
\]
}If the $K_0^2 \times P$ matrix ${\boldsymbol G}_{0}$ has full column rank, which requires $K_0^2 \geq P$, then the overdetermined system \eqref{eq:GAR_veccov} can be solved using least squares as
\[
\widehat{\boldsymbol a} := \G_{0}^\dag {\boldsymbol r}_{0,0}.
\] 
Therefore, with a carefully chosen subsampling matrix $\Phib$, we can recover a {\it G-AR} spectrum of a length-$N$ graph signal, residing on a graph with per node degree $n$ with $\mathcal{O}(\sqrt{P}(1-n^{P+1})/(1-n))$ samples.
%

Previously in \eqref{eq:GAR_subcov}, we used only the equations related to the covariance matrix of $\y_0$, i.e., $\Phib_0\R_\x\Phib_0^H$, which resulted in $K^2_0$ equations in $P$ unknowns. In addition to this, since we have access to $\{\y_k\}_{k=1}^P$, we can also use the equations corresponding to the covariances between $\y_0$ and observations $\{\y_k\}_{k=1}^P$. This results in the following system
of equations { for $q=0,1,\ldots,P$}:
\begin{equation}
\label{eq:GAR_subblock}
\begin{aligned}
\R_{0,q} &= \Phib_0 \R_\x \Phib_q^T \\
&= \sum_{k=1}^P a_k \Phib_0\Sb^{k} \Phib_k^T \R_{k,q} {+  {\boldsymbol \Phi}_0\R_{\n\x} {\boldsymbol \Phi}_q^T},
\end{aligned}
\end{equation}
where $\R_{0,q} \in \mathbb{C}^{K_0 \times K_q}$.
Vectorizing $\R_{0,q}$ in \eqref{eq:GAR_subblock} for $q=0,1,\ldots,P$, we get
{
\begin{equation}
\label{eq:GAR_subblockvec}
\begin{aligned}
\r_{0,q} &=(\Phib_q \otimes \Phib_0)  \vec(\R_{\boldsymbol x})  \\
&\approx   \sum_{k=1}^P a_k \vec\left(\Phib_0\Sb^{k} \Phib_k^T \R_{k,q}\right) = \G_{q} \a,
\end{aligned}
\end{equation}
}where we have stacked $\vec\left(\Phib_0\Sb^{k} \Phib_k^T \R_{k,q}\right)$ to form the columns of the matrix ${\boldsymbol G}_{q} \in \mathbb{R}^{K_0K_q \times P}$ as 
\[
\G_{q} = \left[\vec(\Phib_0\Sb \Phib_1^T \R_{1,q}), \cdots, \vec(\Phib_0\Sb^{P} \Phib_P^T \R_{P,q})\right].
\]
Now, collecting $\{\r_{0,q}\}_{q=0}^{P}$ in $\r_y$ as 
\[
\r_\y = [\r_{0,0}^T,\r_{0,1}^T, \ldots, \r_{0,P}^T]^T,
\] 
and 
$\{\G_{q}\}_{q=0}^{P}$ in $\G$ as 
\[
\G = [\G_{0}^T,\G_{1}^T,\cdots,\G_{P}^T]^T,
\] 
we have $K_0\sum_{q=0}^PK_q$ equations in $P$ unknowns, i.e.,
{
\begin{equation}
\label{eq:GAR-ry}
\begin{aligned}
\r_\y = (\Phib \otimes \Phib_0) \vec(\R_\x)&=  \G \a. 
\end{aligned}
\end{equation}
where recall that $K = K_0 \sum_{q=0}^PK_q$.} It can be shown that the observation matrix $\G$ can be expressed as {$(\Phib \otimes \Phib_0) \Psib_{\rm AR}$} for some matrix $\Psib_{\rm AR}$ (``AR" stands for {\it autoregressive}), which now depends on the compressed observations, sampling matrices, and the graph shift operator.

The above linear system \eqref{eq:GAR-ry} can be solved using least squares as
\[
\widehat{\a} = \G^\dag {\r}_{\y}
\]
if the observation matrix $\G$ has full column rank. This requires $K_0\sum_{q=0}^PK_q \geq P$. Suppose the graph is connected such that every node has at least one neighbor, then by picking one node would already lead to an overdetermined system. In other words, we can recover a {\it G-AR} spectrum with $K_0 = 1$, which amounts to observing { more than $P$ nodes}. For example, recall the cycle graph in Fig.~\ref{fig:cyclegraph} with $N$ nodes, where every node has a degree of two. In order to recover two {\it G-AR} parameters on such graphs (more generally, for any arbitrary graph with per node degree 2) we need to observe {at least $K_0 + K_1 + K_2 =5$} nodes using this technique. 
{ Depending on the graph, this scheme as such might not lead to any compression at all (e.g., in dense graphs) because all $N$ nodes might be in these $K_0P$-hop neighborhoods. In other words, the proposed scheme is more useful for sparse graphs or with small $P$.}

\section{Finite Data Records} \label{sec:finitedata}
So far to recover the graph second-order statistics we have assumed that the true compressed covariance matrix $\R_\y = \mathbb{E}\{\y\y^H\} \in \mathbb{C}^{K \times K}$ is available. However, in practice we only have a finite number of snapshots, call it ${N_s}$, available.  Suppose we observe ${N_s}$ subsampled graph signals denoted by the vectors $\{\y[k]\}_{k=1}^{{Ns}}$, and they are collected in a $K \times {N_s}$ matrix $\Y := \left[\y[1],\y[2],\ldots,\y[{N_s}]\right]$. It is common to use the sample data covariance matrix $\widehat{\R}_\y =\frac{1}{{N_s}} \Y\Y^H \in \mathbb{C}^{K \times K}$ as an estimate of $\R_\y$. We have seen in Sections \ref{sec:subsampling} and~\ref{sec:parametric} that the compressed covariance matrix $\R_\y$ has a special (linear) structure and it is parameterized by a small number of parameters $\thetab$. In this section, we will provide the least squares estimator, maximum likelihood estimator, and the Cram\'er-Rao lower bound for this finite data records scenario.

Let us denote the structured matrix $\R_\y$ as $\R_\y(\thetab)$. Generally, $\r_\y = \vec(\R_\y (\thetab))$  can be expressed as 
\begin{equation}
\label{eq:cov_structure}
\r_\y := \G \thetab,
\end{equation}
where from \eqref{eq:subsamCovVec} we have $\G := (\Phib \otimes \Phib)\Psib_{\rm s}$ and $\thetab := \p$ for the nonparametric spectral domain approach, from \eqref{eq:ry_vertex} we have $\G := (\Phib \otimes \Phib)\Psib_{\rm MA}$ and $\thetab := \b$ for the parametric moving average model, and
from \eqref{eq:GAR-ry} we have {$\G := (\Phib \otimes \Phib_0)\Psib_{\rm AR}$} and $\thetab := \a$ for the parametric autoregressive model. Before we present the least squares solution in the next subsection, we recall that, although we perform a linear compression on $\R_\x$ as $\R_\y = \Phib \R_\x \Phib^T$, the linear structure in $\R_\x(\thetab)$ is maintained in $\R_\y(\thetab)$ as well, as long as the compression matrix is a valid covariance subsampler.

\subsection{Least squares estimator}
Under the abstraction in \eqref{eq:cov_structure}, the question now is, how can the estimated covariance matrix $\widehat{\r}_\y = \vec(\widehat{\R}_\y)$ be {\it matched} to the true covariance matrix $\R_\y$, which has a linear structure. This can for instance be solved in the least squares sense as
%
\begin{equation}
\label{eq:ls_finite}
\widehat{\thetab} = \argmin_{\thetab} \|\widehat{\r}_\y - \G\thetab\|_2^2 = \G^\dag\widehat{\r}_\y.
\end{equation}
Therefore, to summarize, the results derived so far in this paper (including estimators and subsampler designs) for infinite data records are also valid for scenarios with finite data records. Furthermore, the above least squares problem may be also solved with a constraint on $\thetab$, which leads to a constrained least squares problem [cf. Remarks~\ref{rem:cls_prior} and~\ref{rem:cls_ma}]. 

{ The least squares estimators derived thus far do not assume any data distribution and they are reasonable for any data probability density function. In what follows, we will discuss a special case, where the observations are Gaussian distributed.}

\subsection{Maximum likelihood estimator and Cram\'er-Rao bound}

Suppose the compressed data consists of realizations from a sequence of independent and identically distributed (i.i.d.) Gaussian random vectors $\{\y[k]\}_{k=1}^{N_s}$, where for each $k$, the length-$K$ vector $\y[k] \thicksim \mathcal{CN}({\boldsymbol 0}, \R_\y(\thetab))$ with the (positive definite) covariance matrix $\R_\y(\thetab)$ being a function of the parameters $\thetab$ as in \eqref{eq:cov_structure}.

The maximum likelihood estimate of $\thetab$ given ${\boldsymbol Y}$ is obtained by solving the optimization problem
\[
\widehat{\thetab} = \argmax_{\thetab} \,\, l({\boldsymbol Y};\thetab)
\]
with log-likelihood function { (with terms that depend only on the unknowns)~\cite{ottersten1998covariance,scharf1991statistical}}
{
\[
l(\Y;\thetab) = {\nu N_s}\left[ \log \det \{\R_\y^{-1}(\thetab)\} - {\rm tr}\{{\R_\y^{-1}(\thetab)}\widehat{\R}_\y\} \right],
\]
where $\nu=1$ if $\R_\y$ has complex entries and $\nu=0.5$ if $\R_\y$ has real entries.}

The maximum likelihood estimate of $\thetab$ can then be computed by setting the derivative of $l(\Y;\thetab)$ with respect to $\thetab$ to zero, and it is the solution to the regression equation~\cite{scharf1991statistical}:
\begin{equation}
\label{eq:ML}
{\boldsymbol g}_i^H [\R_\y^{-T} \otimes \R_\y^{-1}] (\r_\y - \widehat{\r}_\y) = 0, \quad \forall i,
\end{equation}
where ${\boldsymbol g}_i$ is the $i$th column of $\G$. The above equations must be solved iteratively using algorithms provided in~\cite{burg1982estimation,romero2013wideband,ottersten1998covariance,stoica2011maximum}. { The above equations would hold, if $\r_\y = \widehat{\r}_\y$. The solution \eqref{eq:ls_finite} approximates $\r_\y \approx \widehat{\r}_\y$, in the least squares sense. Also, from \eqref{eq:ML}, we can recognize that the maximum likelihood estimator reduces to a weighted least squares problem
\begin{equation*}
\argmin_{\thetab}(\widehat{\r}_\y - \G\thetab)^H {\boldsymbol C}_w (\widehat{\r}_\y - \G\thetab) = (\G^H  {\boldsymbol C}_w \G)^{-1}\G^H{\boldsymbol C}_w\widehat{\r}_\y
\end{equation*}
with weighting matrix ${\boldsymbol C}_w = {\nu N_s} (\R_\y^{-T}(\thetab) \otimes \R_\y^{-1}(\thetab))$. For the weighting matrix, we may use the estimate $\widehat{\boldsymbol C}_w$ obtained by using $\widehat{\R}_\y$ instead of $\R_\y$.}

Next, we will provide the Cram\'er-Rao bound, which is a lower bound on the variance of the developed least squares estimators when the available data records are finite. { (Note that this is a bound on the variance of $\widehat{\p}$ obtained from the nonparametric approach, and the Cram\'er-Rao bound for the power spectrum estimates from the parametric methods may be derived using transformation of parameters.)} The Cram\'er-Rao bound matrix is the inverse of the Fisher information matrix. The $(i,j)$th entry of the Fisher information matrix, ${\boldsymbol F}$, is given by~\cite{scharf1991statistical}
\begin{equation}
\begin{aligned}
\label{eq:crb}
[{\boldsymbol F}]_{i,j} &= - \mathbb{E}\left\{ \frac{\partial^2}{\partial [\thetab]_i \partial [\thetab]_j} l(\Y;\thetab) \right\}  \\
&= \nu{N_s}{\boldsymbol g}_j^H [\R_\y^{-T}(\thetab) \otimes \R_\y^{-1}(\thetab)] {\boldsymbol g}_i.
\end{aligned}
\end{equation}
{It can be seen from the expression of the Cram\'er-Rao bound that the developed least squares estimators ignore the color of the residual, $\widehat{\r}_\y - {\r}_\y$, which has a covariance matrix ${\boldsymbol C}_w^{-1}$ (not scaled identity). This means that the developed estimators are not {\it efficient} (i.e., they will not achieve the Cram\'er-Rao bound), but are computationally cheap as compared to the {\it asymptotically efficient} maximum likelihood estimators.}



%

\section{Sparse Sampler Design} \label{sec:subsampdesgn}

We have seen so far that the design of the subsampling matrix $\Phib$ is crucial for the reconstruction of the graph second-order statistics. From Theorem~\ref{theo:fullrank_spectral}, we know the conditions under which a subsampling matrix will be a valid covariance subsampler, but still it has to be designed. Alternatively, random compression matrices drawn from a certain probability space (e.g., entries of the subsampling matrix are drawn from a Gaussian or Bernoulli distribution) may be used as they almost surely satisfy the conditions in Theorem~\ref{theo:fullrank_spectral} {(see e.g.,~\cite{romero2015compression})}. However, they might not be practical in the graph setting, because random compression matrices are usually {\it dense} in nature, and to compute linear combinations of  the uncompressed graph signals they have to be made available at a central location. On the other hand, if we choose a sparse sampling matrix, which essentially does node selection, only the subsampled graph signals (very few samples as compared to the number of nodes) have to be processed. Therefore, in what follows, we will develop an algorithm to design a sparse subsampling matrix. 

Consider a structured sparse sampling matrix ${\boldsymbol \Phi} \in \{0,1\}^{K \times N}$, such that the entries of this matrix are determined 
by a binary sampling vector $\w$. More specifically, let us denote the structured subsampling matrix ${\boldsymbol \Phi}$ as 
${\boldsymbol \Phi}({\boldsymbol w}) = {\rm diag_r}[{\boldsymbol w}] \in \{0,1\}^{K \times N}$, which is guided by a {\it component selection} vector  ${\boldsymbol w} = [w_1,\cdots,w_N]^T\in \{0,1\}^N$, where $w_i=1$ indicates that the $i$th graph node is selected, otherwise it is not selected. That is, ${\boldsymbol \Phi}({\boldsymbol w})$
essentially performs {\it graph sampling.}

\subsection{Spectral domain and moving average case} \label{sec:grdy_manonpar}
In this subsection, we will design the subsampling matrix for the estimators based on the spectral domain approach [cf. Section~\ref{sec:subsampling}] and the vertex domain parametric moving average model [cf. Section~\ref{sec:GraphVertexDomain}] as the observation matrices in these cases share a common structure. In particular, the aim is to design a full-column rank observation matrix 
${\G} = [{\boldsymbol \Phi}(\w) \otimes {\boldsymbol \Phi}(\w)]{\boldsymbol \Psi}$ with ${\boldsymbol \Psi}:={\boldsymbol \Psi}_{\rm s}$ or  ${\boldsymbol \Psi}:={\boldsymbol \Psi}_{\rm MA}$, so that we can perfectly recover the second-order statistics by observing a reduced set of only $K$ graph nodes. {To do this, we assume $\Psib$ is perfectly known.}

{Uniqueness and sensitivity} of the least squares solution developed in Sections \ref{sec:subsampling} and \ref{sec:GraphVertexDomain} depends on the spectrum {(i.e., the set of eigenvalues)} of the matrix  
\begin{align*}
{\boldsymbol T}({\boldsymbol w}) &= [({\boldsymbol \Phi}({\boldsymbol w}) \otimes {\boldsymbol \Phi}({\boldsymbol w})){\boldsymbol \Psi}]^T[({\boldsymbol \Phi}({\boldsymbol w}) \otimes {\boldsymbol \Phi}({\boldsymbol w})){\boldsymbol \Psi}]\\
&{=} {\boldsymbol \Psi}^T ({\rm diag}[{\boldsymbol w}] \otimes {\rm diag}[{\boldsymbol w}]) {\boldsymbol \Psi}.
\end{align*}
In other words, the performance of least squares is better if the spectrum of the matrix $({\boldsymbol \Phi} \otimes {\boldsymbol \Phi}){\boldsymbol \Psi}$ is more uniform~\cite{golub1996matrix}. Thus, a good sparse sampler ${\boldsymbol w}$ can be obtained by solving:
\begin{equation}
\label{eq:logdetopt}
\argmax_{{\boldsymbol w} \in \{0,1\}^N} \quad f({\boldsymbol w}) \quad{\rm s.to}\quad  \|{\boldsymbol w}\|_0 = K
\end{equation}
with either $f({\boldsymbol w}) = - {\rm tr}\{{\boldsymbol T}^{-1}({\boldsymbol w})\}$, $f({\boldsymbol w}) = \lambda_{\rm min}\{{\boldsymbol T}({\boldsymbol w})\}$, or $f({\boldsymbol w}) = \log\det \{{\boldsymbol T}({\boldsymbol w})\}$, which tries to balance the spectrum of ${\boldsymbol T}({\boldsymbol w})$.  
Alternatively, the Fisher information matrix \eqref{eq:crb} can be used instead of ${\boldsymbol T}({\boldsymbol w})$ to design samplers using techniques discussed in~\cite{ChepuriTSPsel}. 

\subsubsection{Convex relaxation}
The above Boolean nonconvex problem with any one of the cost functions can be relaxed and solved using convex optimization (e.g., see~\cite{ChepuriTSPsel, chepuri2014spl}). To express \eqref{eq:logdetopt} as a convex optimization problem, we will introduce an auxiliary variable ${\boldsymbol Z} = {\w}{\w}^T$ and its related length-$N^2$ vector ${\boldsymbol z} := \vec({\boldsymbol Z})$.  
{ Since ${\rm diag}[{\boldsymbol w}] \otimes {\rm diag}[{\boldsymbol w}] = {\rm diag}[{\boldsymbol z}]$, we can write $f({\boldsymbol w})$ as $f({\boldsymbol z})$}, and relaxing (a) Boolean constraints on $\w$ to the box constraints, (b) the cardinality constraint to an $\ell_1$-norm constraint, and (c) the rank-1 constraint on ${\boldsymbol Z}$, we obtain the following optimization problem
\begin{equation}
\begin{aligned}
\label{eq:cvx_case1}
&\argmax_{{\boldsymbol w}, {\boldsymbol Z}} \quad f({\boldsymbol z}) \\
 &\quad {\rm s.to}\quad  {\bf 1}^T\w = K, \quad 0\leq w_n\leq 1, n=1,\ldots,N,\\
 &\,\, \quad \quad \quad {\boldsymbol Z} \succeq \w\w^T, \,\, {\boldsymbol z} = \vec({\boldsymbol Z}),
 \end{aligned}
\end{equation}
where ${\boldsymbol Z} \succeq \w\w^T$ can be expressed as a linear matrix inequality that is linear in $\w$. 

\subsubsection{Submodular greedy optimization}
Due to the involved complexity of solving the convex relaxed problem \eqref{eq:cvx_case1} and keeping in mind the large scale problems that arise in the graph setting, we will now focus on the optimization problem \eqref{eq:logdetopt} with $f({\boldsymbol w}) = \log\det \{{\boldsymbol T}({\boldsymbol w})\}$ as it can be solved near-optimally using a low-complexity greedy algorithm.

Let us define an index set $\mathcal{X}$ that is related to the component selection vector ${\boldsymbol w}$ as 
$
\mathcal{X} = \{ m \, | \, w_m=1,m=1,\ldots,N\},
$ 
where $\mathcal{X} \subseteq \mathcal{N}$ with $\mathcal{N} = \{1,\ldots,N\}$.
We can now express the cost function $f({\boldsymbol w}) = \log\det \{{\boldsymbol T}({\boldsymbol w})\}$ equivalently as the set function 
given by 
\begin{equation}
\label{eq:submodularlogdet}
f(\mathcal{X}) = \log \det \left\{\sum\limits_{{(i,j)} \in \mathcal{X} \times \mathcal{X}} {\boldsymbol \psi}_{i,j} {\boldsymbol \psi}_{i,j}^T\right\},
\end{equation}
where the length-$N^2$ column vectors $\{{\boldsymbol \psi}_{1,1},{\boldsymbol \psi}_{1,2}, \cdots,{\boldsymbol \psi}_{N,N}\}$ are used to form the rows of ${\boldsymbol \Psi}$ as ${\boldsymbol \Psi} = [{\boldsymbol \psi}_{1,1},{\boldsymbol \psi}_{1,2}, \cdots,{\boldsymbol \psi}_{N,N}]^T$. We use such an indexing because the sampling matrix ${\boldsymbol \Phi} \otimes {\boldsymbol \Phi}$ results in a structured (row) subset selection. The notation $\sum\nolimits_{(i,j)}$ denotes the double summation; As an example, for $\mathcal{X} = \{1,2\}$, we have $\sum\nolimits_{(i,j) \in \mathcal{X} \times \mathcal{X}} {\boldsymbol \psi}_{i,j} = {\boldsymbol \psi}_{1,1} + {\boldsymbol \psi}_{1,2} + {\boldsymbol \psi}_{2,1}+ {\boldsymbol \psi}_{2,2}$. 

{\it Submodularity} \textemdash a notion based on the property of diminishing returns, is useful for solving discrete combinatorial optimization problems of the form \eqref{eq:logdetopt} (see e.g., \cite{krause2008optimizing}).  Submodularity can be formally defined as follows.

\begin{mydef}[Submodular function]
Given two sets $\mathcal{X}$ and $\mathcal{Y}$ such that for every $\mathcal{X} \subseteq \mathcal{Y} \subseteq \mathcal{N}$ and $s \in \mathcal{N} \backslash \mathcal{Y}$, the set function $f: 2^{N} \rightarrow \mathbb{R}$ defined on the subsets of $\mathcal{N}$ is said to be submodular, if it satisfies
\[
f(\mathcal{X} \cup \{s\}) - f(\mathcal{X}) \geq f(\mathcal{Y} \cup \{s\}) - f(\mathcal{Y}).
\]
\end{mydef}

Suppose the submodular function is monotone nondecreasing, i.e., $f(\mathcal{X})$ $\leq f(\mathcal{Y})$ for all $\mathcal{X} \subseteq \mathcal{Y} \subseteq \mathcal{N}$ and normalized, i.e., $f(\emptyset) = 0$, then a greedy maximization of such a function as summarized in Algorithm~\ref{alg:greedy} is \emph{near optimal} with an approximation factor of $(1 -1/e)$, where $e$ is Euler's number~\cite{nemhauser1978analysis}. In other words, we can achieve 
\[
f(\mathcal{X}) \geq (1 -1/e) f({\rm opt}),
\]
where  $f({\rm opt})$ is the optimal value of the problem 
\[
\underset{{\mathcal{X} \subseteq \mathcal{N}, |\mathcal{X}| = K}}{\text{maximize}} \,\, f(\mathcal{X}).
\]

In order to have a non-empty input set $f(\emptyset) = 0$, the cost function \eqref{eq:submodularlogdet} is slightly modified with a diagonal loading, and it satisfies the above properties as stated in the following theorem.
\begin{mytheo} \label{lem:submodular_logdet}
The set function $f: 2^{N} \rightarrow \mathbb{R}$ given by
\begin{align}
\label{eq:submodularlogdet_normalized}
\hskip-2mmf(\mathcal{X}) = \log \det \left\{\sum\limits_{{(i,j)} \in \mathcal{X} \times \mathcal{X}} {\boldsymbol \psi}_{i,j} {\boldsymbol \psi}_{i,j}^T + \epsilon {\boldsymbol I}\right\} - N \log \epsilon
\end{align}
is a normalized, nonnegative monotone, submodular function on the set $\mathcal{X} \subset \mathcal{N}$. Here, $\epsilon >0$ is a small constant. 
\end{mytheo}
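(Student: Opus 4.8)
The plan is to verify the four required properties --- normalization, nonnegativity, monotonicity, and submodularity --- in increasing order of difficulty, working throughout with the matrix ${\boldsymbol M}(\mathcal{X}) := \sum_{(i,j)\in\mathcal{X}\times\mathcal{X}}{\boldsymbol \psi}_{i,j}{\boldsymbol \psi}_{i,j}^T + \epsilon{\boldsymbol I}$, so that $f(\mathcal{X}) = \log\det{\boldsymbol M}(\mathcal{X}) - N\log\epsilon = \log\det({\boldsymbol I} + \epsilon^{-1}{\boldsymbol M}_0(\mathcal{X}))$ with ${\boldsymbol M}_0(\mathcal{X}) := \sum_{(i,j)\in\mathcal{X}\times\mathcal{X}}{\boldsymbol \psi}_{i,j}{\boldsymbol \psi}_{i,j}^T \succeq {\boldsymbol 0}$ a sum of rank-one positive semidefinite terms. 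Normalization is then immediate, since ${\boldsymbol M}_0(\emptyset) = {\boldsymbol 0}$ gives $f(\emptyset) = \log\det{\boldsymbol I} = 0$. Nonnegativity follows because ${\boldsymbol M}_0(\mathcal{X})\succeq{\boldsymbol 0}$ implies ${\boldsymbol I}+\epsilon^{-1}{\boldsymbol M}_0(\mathcal{X})\succeq{\boldsymbol I}$, whose determinant is at least one.

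For monotonicity, note that $\mathcal{X}\subseteq\mathcal{Y}$ gives $\mathcal{X}\times\mathcal{X}\subseteq\mathcal{Y}\times\mathcal{Y}$, hence ${\boldsymbol M}_0(\mathcal{Y}) - {\boldsymbol M}_0(\mathcal{X}) = \sum_{(i,j)\in(\mathcal{Y}\times\mathcal{Y})\setminus(\mathcal{X}\times\mathcal{X})}{\boldsymbol \psi}_{i,j}{\boldsymbol \psi}_{i,j}^T\succeq{\boldsymbol 0}$, i.e. ${\boldsymbol M}(\mathcal{Y})\succeq{\boldsymbol M}(\mathcal{X})\succ{\boldsymbol 0}$, and $f(\mathcal{Y})\geq f(\mathcal{X})$ by monotonicity of $\log\det$ in the positive-definite order. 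This step also records the structural fact I will need later: adding a node $s\notin\mathcal{X}$ augments ${\boldsymbol M}$ by the \emph{set-dependent} increment ${\boldsymbol \Delta}_s(\mathcal{X}) = {\boldsymbol \psi}_{s,s}{\boldsymbol \psi}_{s,s}^T + \sum_{j\in\mathcal{X}}({\boldsymbol \psi}_{s,j}{\boldsymbol \psi}_{s,j}^T + {\boldsymbol \psi}_{j,s}{\boldsymbol \psi}_{j,s}^T)$, which factors as ${\boldsymbol B}_s(\mathcal{X}){\boldsymbol B}_s(\mathcal{X})^T$ on collecting ${\boldsymbol \psi}_{s,s}$ and $\{{\boldsymbol \psi}_{s,j},{\boldsymbol \psi}_{j,s}\}_{j\in\mathcal{X}}$ as the columns of ${\boldsymbol B}_s(\mathcal{X})$.

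For submodularity I would work with the marginal gain $\rho_s(\mathcal{X}) := f(\mathcal{X}\cup\{s\}) - f(\mathcal{X})$ and aim to prove $\rho_s(\mathcal{X})\geq\rho_s(\mathcal{Y})$ whenever $\mathcal{X}\subseteq\mathcal{Y}$ and $s\notin\mathcal{Y}$. Applying the identity $\det({\boldsymbol I}+{\boldsymbol A}{\boldsymbol B})=\det({\boldsymbol I}+{\boldsymbol B}{\boldsymbol A})$ to the factored increment yields $\rho_s(\mathcal{X}) = \log\det\!\left({\boldsymbol I} + {\boldsymbol B}_s(\mathcal{X})^T{\boldsymbol M}(\mathcal{X})^{-1}{\boldsymbol B}_s(\mathcal{X})\right)$. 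In the classical single-rank-one setting the increment vector is independent of the set, and submodularity then follows cleanly from the fact that ${\boldsymbol M}(\mathcal{X})^{-1}\succeq{\boldsymbol M}(\mathcal{Y})^{-1}$ (the inverse is antitone in the PSD order) combined with the monotonicity of $\log(1+\cdot)$.

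The main obstacle is precisely that the increment here is \emph{not} fixed: ${\boldsymbol B}_s(\mathcal{Y})$ carries strictly more columns than ${\boldsymbol B}_s(\mathcal{X})$ because the cross terms ${\boldsymbol \psi}_{s,j}{\boldsymbol \psi}_{s,j}^T$ range over the whole current set. Equivalently, the matrix-valued map $\mathcal{X}\mapsto{\boldsymbol M}_0(\mathcal{X})$ is \emph{supermodular} in the PSD order --- its pairwise second difference is ${\boldsymbol M}_0(\mathcal{X}\cup\{s,t\}) - {\boldsymbol M}_0(\mathcal{X}\cup\{s\}) - {\boldsymbol M}_0(\mathcal{X}\cup\{t\}) + {\boldsymbol M}_0(\mathcal{X}) = {\boldsymbol \psi}_{s,t}{\boldsymbol \psi}_{s,t}^T + {\boldsymbol \psi}_{t,s}{\boldsymbol \psi}_{t,s}^T\succeq{\boldsymbol 0}$ --- so the standard ``concave $\circ$ modular'' composition rule for submodularity does not apply directly. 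My plan to close this gap is to pass to the pairwise (second-difference) characterization of submodularity and argue that the curvature of $\log\det$ dominates this extra positive coupling term: concretely, I would lower bound $f(\mathcal{X}\cup\{s\})+f(\mathcal{X}\cup\{t\}) - f(\mathcal{X}\cup\{s,t\}) - f(\mathcal{X})$ by combining the determinant identity above with the concavity of $\log\det$ along the PSD segment joining ${\boldsymbol M}(\mathcal{X}\cup\{s\})$ and ${\boldsymbol M}(\mathcal{X}\cup\{t\})$, using the diagonal loading $\epsilon{\boldsymbol I}$ to keep every ${\boldsymbol M}(\cdot)^{-1}$ uniformly bounded. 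I expect this to be the delicate step, since it is where the rank-one structure of the coupling ${\boldsymbol \psi}_{s,t}{\boldsymbol \psi}_{s,t}^T$ and the loading level $\epsilon$ must enter quantitatively rather than through generic PSD-monotonicity.
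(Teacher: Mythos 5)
Your treatment of normalization, nonnegativity, and monotonicity is correct, and your structural observations---the set-dependent increment ${\boldsymbol \Delta}_s(\mathcal{X})$ and the fact that $\mathcal{X}\mapsto{\boldsymbol M}_0(\mathcal{X})$ has PSD second differences ${\boldsymbol \psi}_{s,t}{\boldsymbol \psi}_{s,t}^T+{\boldsymbol \psi}_{t,s}{\boldsymbol \psi}_{t,s}^T$---are exactly right. But your proposal stops at a plan for the one property that actually needs work, and that plan cannot be carried out: no curvature-versus-coupling argument can succeed, because the submodularity claim itself is false. For comparison, the paper's own proof is a one-line reduction: it cites the result of Shamaiah et al.\ that $g(\mathcal{X})=\log\det\{\sum_{i\in\mathcal{X}}{\boldsymbol a}_i{\boldsymbol a}_i^T+\epsilon{\boldsymbol I}\}-N\log\epsilon$ is submodular and asserts the theorem follows, remarking only that the number of summands is $|\mathcal{X}|^2$ rather than $|\mathcal{X}|$. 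That classical argument hinges on the increment upon adding $s$ being the \emph{fixed} matrix ${\boldsymbol a}_s{\boldsymbol a}_s^T$, independent of the current set---precisely the property you observed fails here. So the paper glosses over the very obstruction you isolated; in this respect your (incomplete) analysis is sharper than the proof it is being measured against.

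Here is why the gap is unfixable, using the paper's own structure $\Psib_{\rm s}=\bar{\U}\circ\U$ on the simplest possible graph. Take $N=2$ with the shift operator equal to the adjacency matrix of a single edge, whose orthonormal eigenbasis is $\U=\tfrac{1}{\sqrt{2}}\left[\begin{smallmatrix}1&1\\1&-1\end{smallmatrix}\right]$. Then ${\boldsymbol \psi}_{1,1}={\boldsymbol \psi}_{2,2}=\tfrac12(1,1)^T$, ${\boldsymbol \psi}_{1,2}={\boldsymbol \psi}_{2,1}=\tfrac12(1,-1)^T$, the four rank-one terms sum to the identity, and a direct evaluation gives
\begin{align*}
f(\emptyset)&=0,\qquad f(\{1\})=f(\{2\})=\log\bigl(1+\tfrac{1}{2\epsilon}\bigr),\\
f(\{1,2\})&=2\log\bigl(1+\tfrac{1}{\epsilon}\bigr).
\end{align*}
Submodularity at $\mathcal{X}=\emptyset\subseteq\mathcal{Y}=\{2\}$ with $s=1$ would require $f(\{1\})-f(\emptyset)\ge f(\{1,2\})-f(\{2\})$, i.e.\ $2\log\bigl(1+\tfrac{1}{2\epsilon}\bigr)\ge 2\log\bigl(1+\tfrac{1}{\epsilon}\bigr)$, which fails for every $\epsilon>0$; the deficit approaches $2\log 2$ as $\epsilon\to 0$, so it is not a small-$\epsilon$ artifact. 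The cross terms make $f$ strictly \emph{super}modular at this pair---exactly the failure mode your second-difference computation anticipated. The honest conclusion is that your first three properties stand, the fourth is not a theorem (and consequently the $(1-1/e)$ guarantee invoked for the greedy algorithm does not follow from Nemhauser et al.), and a correct statement would have to either drop the cross terms, restrict the ${\boldsymbol \psi}_{i,j}$, or retreat to weaker notions such as approximate or weak submodularity.
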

%
%
%
In \eqref{eq:submodularlogdet_normalized}, $\epsilon{\boldsymbol I}$ is needed to carry out the first few iterations of Algorithm~\ref{alg:greedy} and $- N \log \epsilon$ ensures that $f(\emptyset)$ is zero.  Using the result from~\cite{shamaiah2010greedy} that the set function $g: 2^N \rightarrow \mathbb{R}$, given by 
\begin{align}
\label{eq:classicallogdet}
g(\mathcal{X}) = \log \det \left\{\sum\limits_{i \in \mathcal{X}} {\boldsymbol a}_{i} {\boldsymbol a}_{i}^T + \epsilon {\boldsymbol I}\right\} 
- N \log \epsilon
\end{align}
with column vectors $\{{\boldsymbol a}_{i}\}_{i=1}^N$ is a normalized, nonnegative monotone, submodular function on the set $\mathcal{X} \subseteq \mathcal{N}$, we can prove Theorem~\ref{lem:submodular_logdet}. Therefore, the solution based on the greedy algorithm summarized in Algorithm~\ref{alg:greedy} results in a $(1-1/e)$ optimal solution for~\eqref{eq:logdetopt}. Note that the number of summands 
in \eqref{eq:classicallogdet} and \eqref{eq:submodularlogdet_normalized}, is respectively, $|\mathcal{X}|$ and $|\mathcal{X}|^2$.
It is worth mentioning that the greedy algorithm is linear in $K$, while computing \eqref{eq:submodularlogdet_normalized} remains the dominating cost. 

Other submodular functions that promote full-column rank model matrices, e.g., the frame potential~\cite{vertterlisensorplacement} defined as $f({\boldsymbol w}) = {\rm tr}\{{\boldsymbol T}^H({\boldsymbol w}){\boldsymbol T}({\boldsymbol w})\}$, are also reasonable costs to optimize. 
Finally, random subsampling (i.e., $\w$ has random $0$ or $1$ entries) is not suitable as it might not always result in a full-column rank model matrix.

\begin{algorithm}[!t]
\caption{Greedy algorithm}
\label{alg:greedy}
\begin{algorithmic}
\item[1.] \textbf{Require} $\mathcal{X} = \emptyset, K$.
\item[2.] \textbf{for} $k=1$ to $K$
\item[3.] \hskip1cm$
s^\ast = \argmax\limits_{ s \notin \mathcal{X}}   \, f (\mathcal{X} \cup \{s\}) 
$
\item[4.] \hskip1cm$
\mathcal{X} \leftarrow \mathcal{X} \cup \{s^\ast\}
$
\item[5.] \textbf{end} 
\item[6.] \textbf{Return} $\mathcal{X}$
\end{algorithmic}
\end{algorithm} 

\subsection{Autoregressive case}

The subsampling matrix for the spectral domain and moving average approaches can be designed offline as the observation matrix $\Psib$ was not depending on the data, but it depends only on the graphical model {(i.e, either $\U$ or $\Sb$)}. In contrast, an {\it optimal} offline subsampler design for the autoregressive case is {\it not possible} due to the fact that the observation matrix depends on the data, and to choose the best subset of nodes requires observations from all the nodes. This is the side effect of modeling the graph autoregressive {signal} as \eqref{eq:GARfilter} to arrive at an elegant linear estimator.

Nevertheless, suppose the second-order statistics are available, e.g., from training data, estimated from subsampled observations using the nonparametric or moving average approach (where the sampler is designed using Algorithm~\ref{alg:greedy} as discussed in Section~\ref{sec:grdy_manonpar}), or by approximating the second-order statistics with white noise, then a suboptimal sampler can be designed with techniques similar as those in Section~\ref{sec:grdy_manonpar}.  

Alternatively, if a high-complexity non-linear estimator can be afforded, then by modeling the graph autoregressive process using~\eqref{eq:GARnonlinear}, 
the dependence of the observation matrix on the data can be avoided [cf.~\eqref{eq:GARnonlinearvecto}]. In that case, the subsampler can be designed offline using techniques in~\cite{ChepuriTSPsel,Shilpa15camsap1}. 

We underline that the algorithms provided here to design sparse samplers for different cases can also be used to design mean squared error optimal sparse samplers for the compressive covariance sensing framework~\cite{ariananda2012compressive,romero2013wideband,Romero16CCSspm}. In other words, although minimal sparse rulers satisfy the identifiability conditions to reconstruct the second-order statistics of stationary time-series, the algorithms provided in this paper are needed to guarantee a desired reconstruction performance.
\begin{figure*}[!th]
     \centering
\begin{subfigure}[b]{0.45\columnwidth}
     \centering
\includegraphics[width=\columnwidth,height=1.25in]{./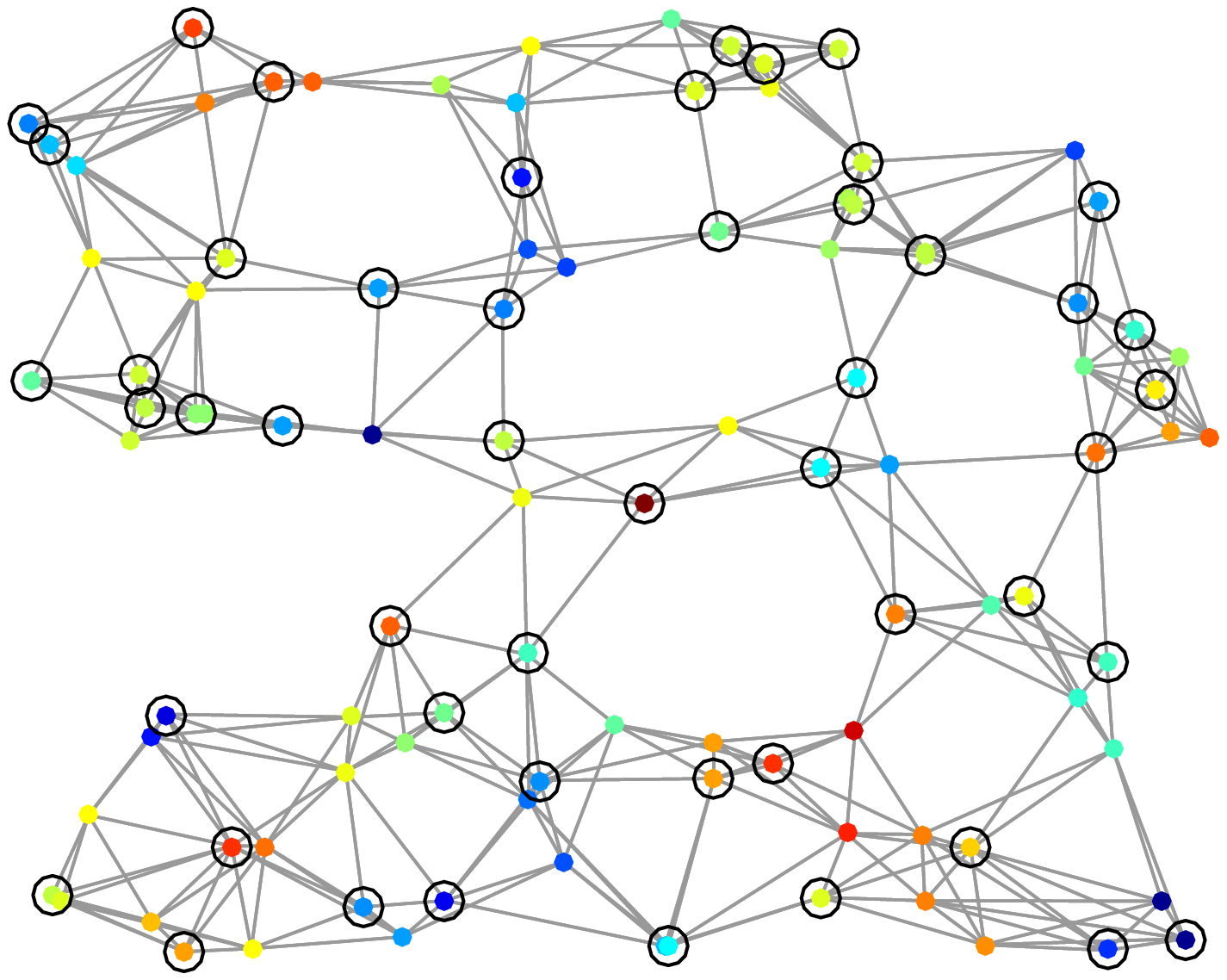}
\caption{}
\label{fig:nodesample1_nonpar}
        \end{subfigure}
~\hskip1.65cm
\begin{subfigure}[b]{0.45\columnwidth}
     \centering
     \includegraphics[width=\columnwidth,height=1.25in]{./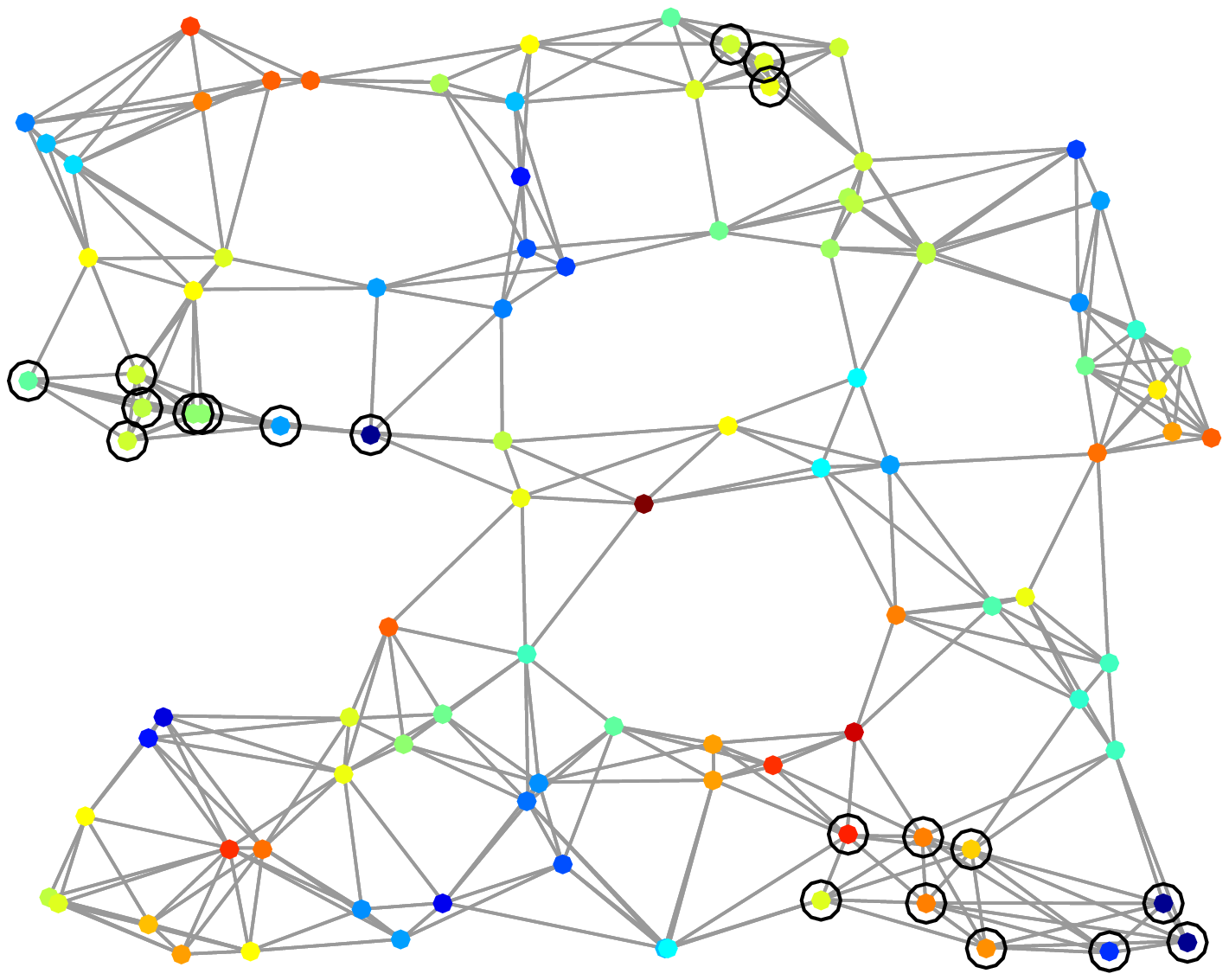}
     \caption{}
     \label{fig:nodesample1_ma}
             \end{subfigure}
~\hskip1.65cm
\begin{subfigure}[b]{0.45\columnwidth}
     \centering
\includegraphics[width=\columnwidth,height=1.25in]{./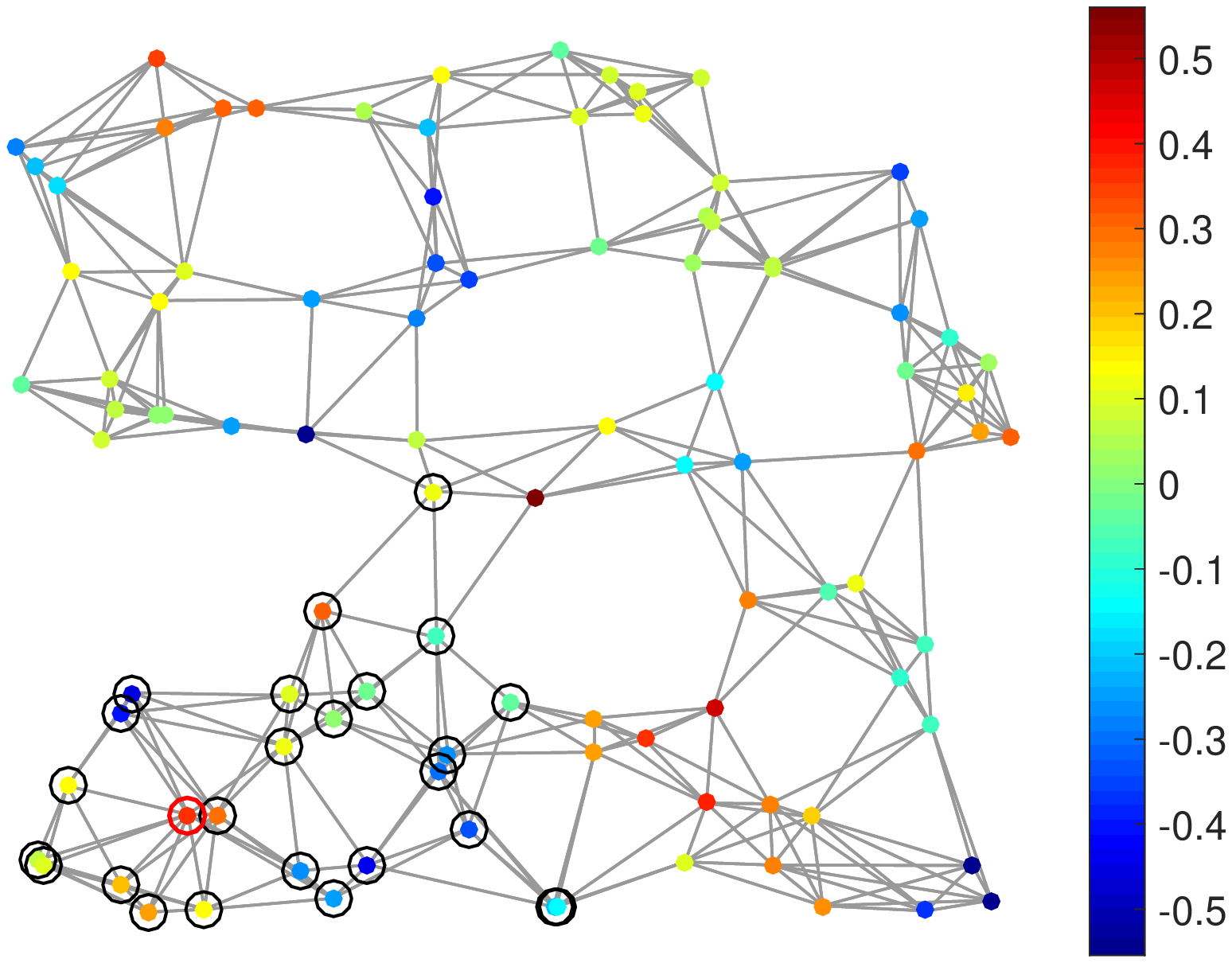}
\caption{}
\label{fig:nodesample1_ar}
             \end{subfigure}
     \caption{Sampling random graphs with $N=100$ nodes for synthetic data. The sampled graph nodes are highlighted by the circles around the nodes and the node coloring simply denotes a realization of the graph signal. (a) Non-parametric model with $K = 50$. (b) Moving average model with $K = 26$. (c) Autoregressive model with $K_0=1$, where the $P$-hop neighborhood around the node indicated with the red circle is observed.}
     \label{fig:nodesamp_synthetic}
\end{figure*}
\begin{figure*}
\psfrag{Laplacian eigenvalues}{\hskip-7mm\footnotesize Shift-operator (Laplacian) eigenvalues}
\psfrag{True PSD}{ \scriptsize True graph power spectrum}
\psfrag{normalized squared error [dB]}{\hskip5mm \footnotesize $\| \p - \widehat{\p}\|_2^2/\|\p\|$}
\psfrag{Percentage Compression}{\hskip5mm\footnotesize$\%$ compression}
\psfrag{Estimated PSD autoreggresive model 76 compression}{\scriptsize Autoregressive ($P=3$, $74\%$ compression) }
\psfrag{Estimated PSD moving average model 80 compression}{\scriptsize Moving average ($Q=13$, $74\%$ compression)}
\psfrag{non-parametric approach 50 compression}{ \scriptsize $50\%$ compression}
\psfrag{non-parametric approach 50+ compression}{ \scriptsize $50\%$ compression}
\psfrag{non-parametric approach 0 compression}{ \scriptsize $0\%$ compression}
\psfrag{non-parametric approach}{\scriptsize Non-parametric}
\psfrag{moving average}{\scriptsize Moving average ($Q=13$)}
\psfrag{CRLB 50 compression}{ \scriptsize CRLB  ($50\%$ compression)}
    \psfrag{normalized mean squared error [dB]}{\hskip10mm\footnotesize NMSE [dB]}
    \psfrag{Number of samples}{\hskip-6mm \footnotesize Number of snapshots, $N_s$}
    \psfrag{moving average 74 compression}{\scriptsize $74$\% compression}
        \psfrag{moving average 90 compression}{\scriptsize $90$\% compression}
    \psfrag{moving average 0 compression}{\scriptsize $0$\% compression}
        \psfrag{autoreggresive 0 compression}{\scriptsize $P=3$, $0$\% compression}
        \psfrag{autoreggresive 74 compression}{\scriptsize $P=3$, $74$\% compression}
        \psfrag{autoreggresive 0 compression P6}{\scriptsize $P=6$, $0$\% compression}
     \centering
          \begin{subfigure}[b]{0.4\textwidth}
               \centering
          {\includegraphics[width=2.9in, height=1.65in]{./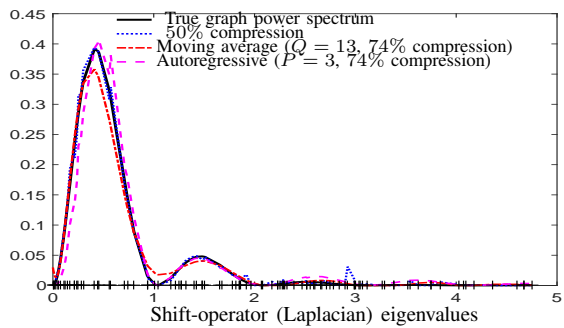}}
          \caption{Graph power spectrum based on $N_s=1000$ snapshots}
          \label{fig:psd_syn}
          \end{subfigure} 
 ~\hskip1.25cm
~
	\begin{subfigure}[b]{0.4\textwidth}
		     \centering
                    \includegraphics[width=3in,height=1.65in]{./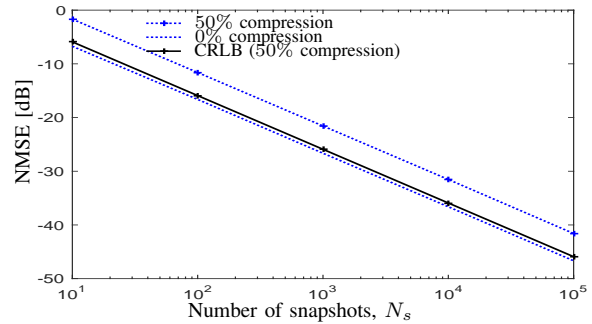}
                              \caption{Non-parametric method}
               \label{fig:mse_nonpar}
 \end{subfigure} 
 \\[1em]

	\begin{subfigure}[b]{0.4\textwidth}
	     \centering	
               \includegraphics[width=3in, height=1.65in]{./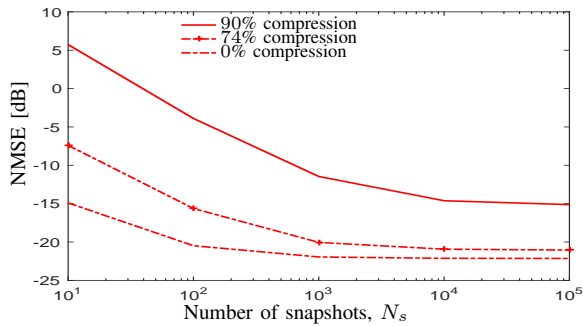}
               \caption{Moving average model ($Q=13$)}
               \label{fig:mse_ma}
                              \end{subfigure} 
 ~\hskip1.75cm
     \begin{subfigure}[b]{0.4\textwidth}
          \centering
     \includegraphics[width=3in,height=1.65in]{./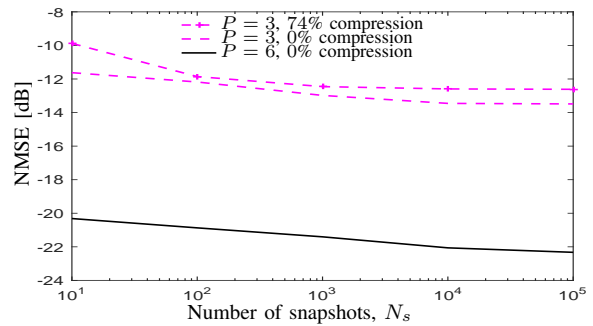}
     \caption{Autoregressive model}
     \label{fig:mse_ar}
	\end{subfigure} 
     \caption{Performance analysis on the synthetic dataset. In (a), markers indicate the non-uniformly distributed eigenvalues of the graph Laplacian matrix along the x-axis. }
     \label{fig:mse}
\end{figure*}
\section{Numerical Experiments} \label{sec:numericalresults}

The developed framework of sampling on graphs for power spectrum estimation is illustrated with numerical experiments\footnote{\noindent Software and datasets to reproduce results of this paper can be downloaded from  {\url{http://cas.et.tudelft.nl/~sundeep/sw/jstsp16gpsd.zip}}} on synthetic as well as real datasets as discussed next. 

\subsubsection*{Synthetic data (random graph)} For experiments using synthetic data, a random sensor graph with $N=100$ nodes is generated using the {\tt GSPBOX}~\cite{perraudin2014gspbox}. The generated graph topology can be seen in Figure~\ref{fig:nodesamp_synthetic}, where the colored nodes represent the value of the graph signal for one realization. Graph stationary signals are generated by graph filtering zero-mean unit-variance white noise with a filter, which has a squared magnitude frequency response as shown in Figure~\ref{fig:psd_syn} (labeled as ``True graph power spectrum");  such a frequency response can be, for instance, { approximated} using a filter with $L=7$ coefficients. {For the shift operator, we use the graph Laplacian matrix.} We use $N_s = 1000$ snapshots to form a sample covariance matrix, which we use in the experiments.  


\begin{figure*}[!t]
\psfrag{index}{\hskip-1mm\scriptsize index}
\psfrag{singular value}{\hskip-2mm\scriptsize singular value}
\psfrag{minimum sparse ruler}{\scriptsize min. sparse ruler}
\psfrag{submodular design}{\scriptsize submodular design}
     \centering
   \begin{subfigure}[b]{0.4\columnwidth}  
   \includegraphics[width=\columnwidth,height=1.25in]{./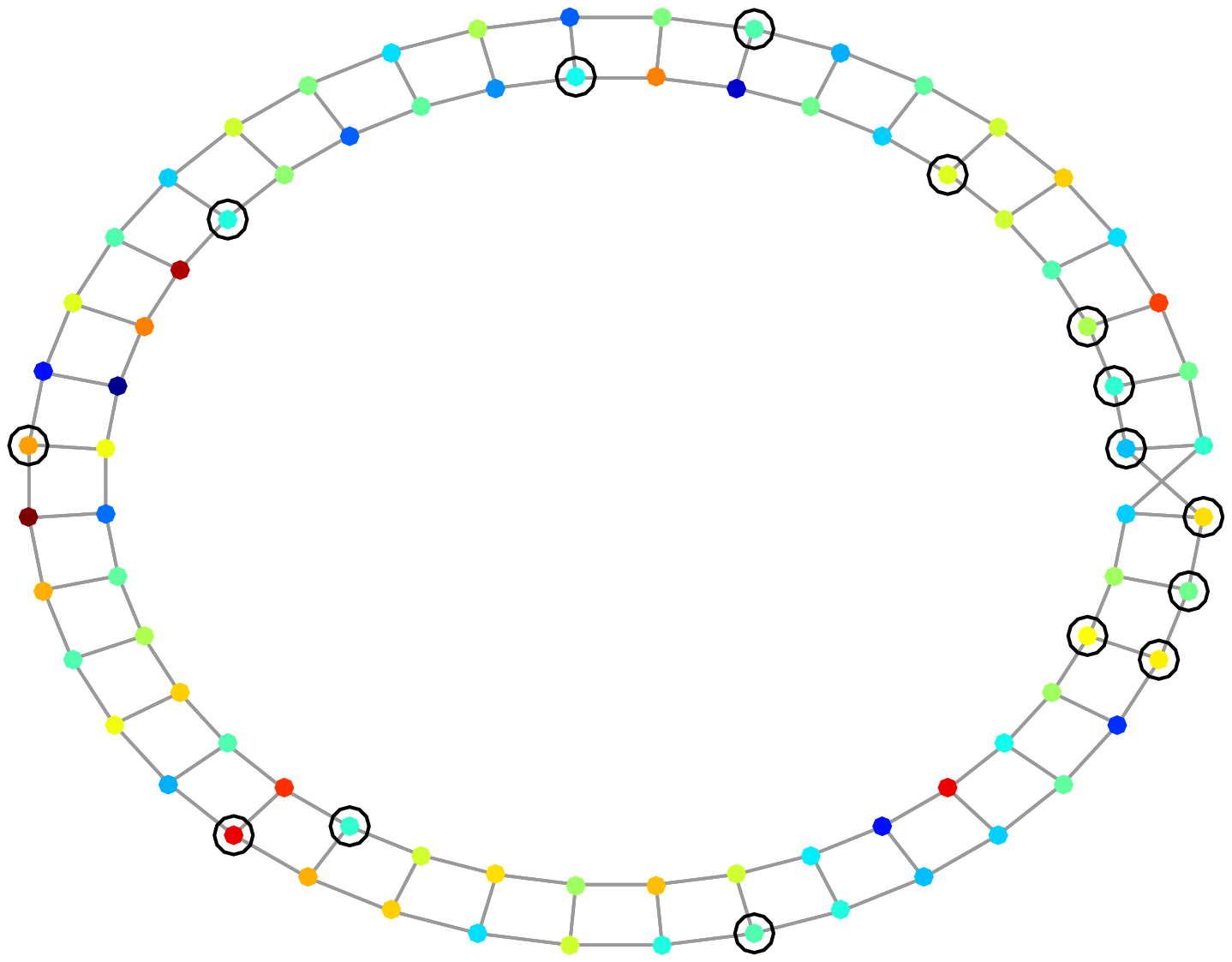}
   \caption{}
   \label{fig:minspars_mobius}
   \end{subfigure}
~\hskip1.75cm
   \begin{subfigure}[b]{0.4\columnwidth}
   \includegraphics[width=\columnwidth,height=1.25in]{./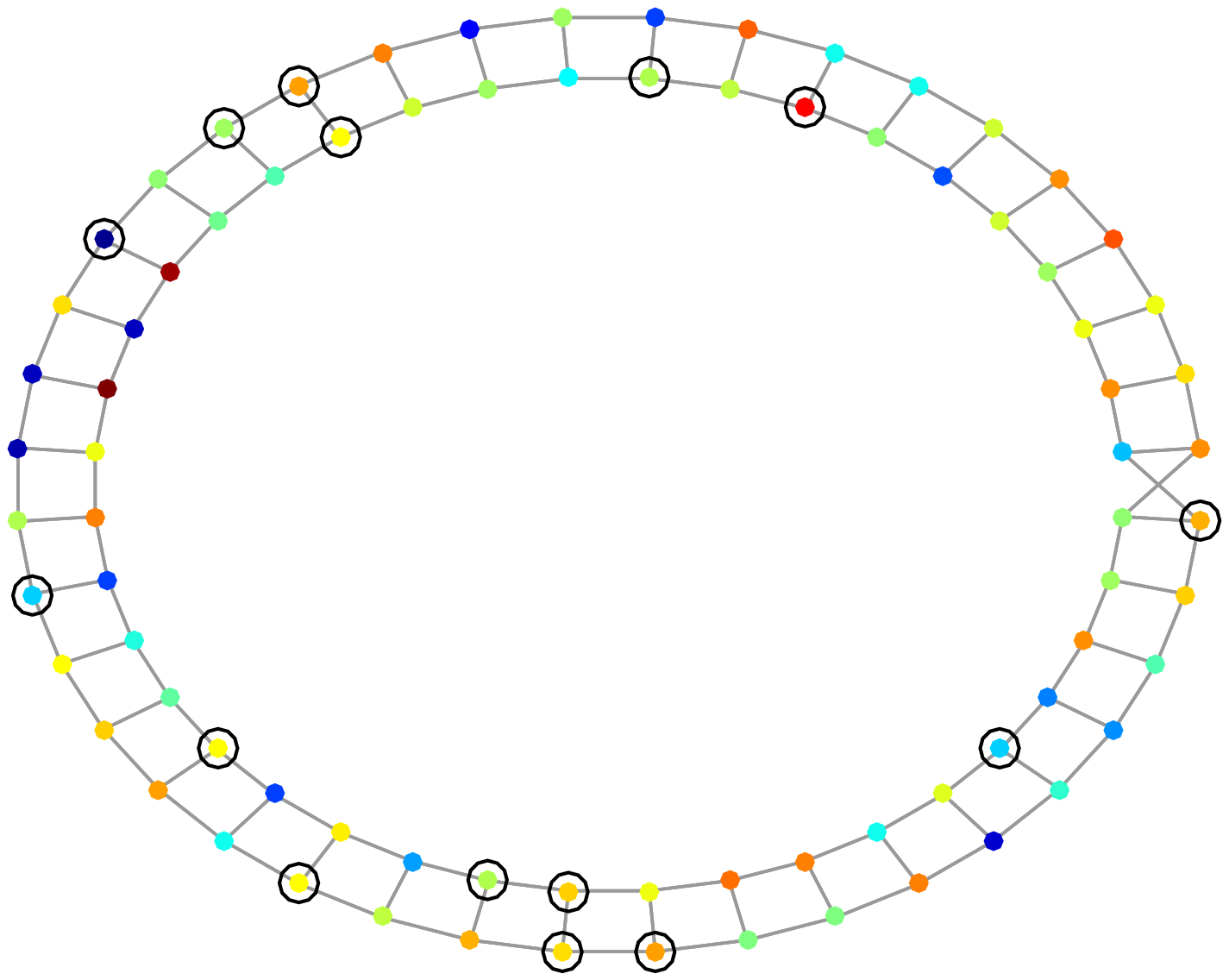}
      \caption{}
     \label{fig:submod_mobius}
\end{subfigure}
~\hskip1cm
        \begin{subfigure}[b]{0.65\columnwidth}
        \includegraphics[width=\columnwidth,height=1.3in]{./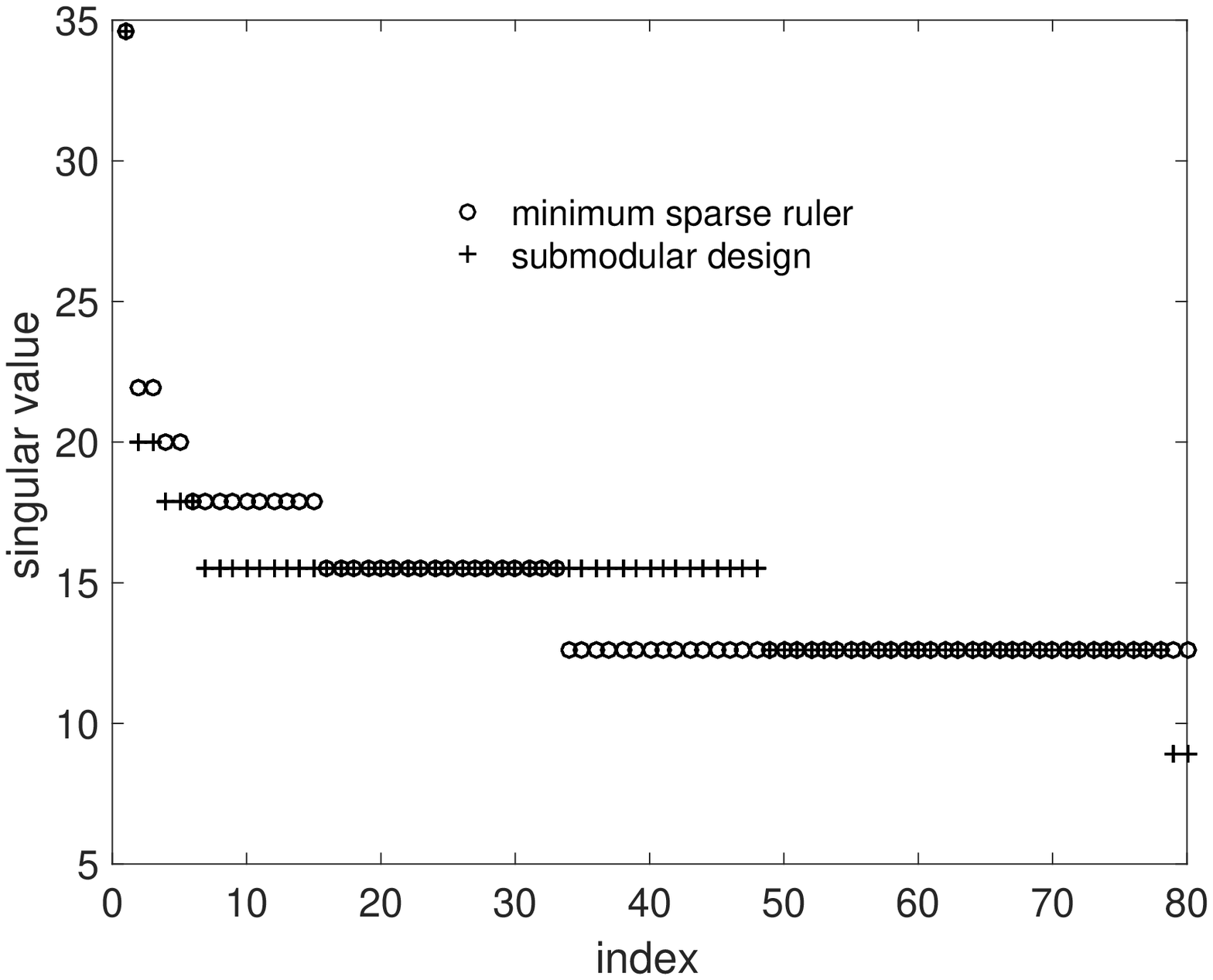}
           \caption{}
        \label{fig:svd_submod}
        \end{subfigure}
     \caption{Sampling M\"obius ladder \textemdash a circulant graph with $N=80$ nodes. The sampled graph nodes are highlighted by the circles around the nodes and the node coloring simply denotes a realization of the graph signal. (a) Minimal sparse ruler based sampling with $K = 15$. (b) Sampling based on submodular design with $K = 15$. (c) Spectrum of ${\boldsymbol T}({\boldsymbol w}) = {\boldsymbol \Psi}_{\rm s}^T ({\rm diag}[{\boldsymbol w}] \otimes {\rm diag}[{\boldsymbol w}]) {\boldsymbol \Psi}_{\rm s}$ with ${\boldsymbol w}$ being the minimal sparse ruler and for ${\boldsymbol w}$ computed using the greedy submodular design.}
     \label{fig:mobius_sparseruler}
     \vspace*{-2mm}
\end{figure*}

For the non-parametric model, using Algorithm~\ref{alg:greedy}, we first design the subsampler by selecting rows of the matrix ${\boldsymbol \Psi}_{\rm s}$ in a structured manner determined by $\w$. We show in Figure~\ref{fig:psd_syn}, that the least squares estimate of the graph power spectrum obtained by observing $K=50$ out of $N=100$ nodes ($50\%$ compression) fits reasonably well to the true power spectrum. In Figure~\ref{fig:nodesample1_nonpar}, the selected graph nodes are indicated with a black circle. However, no particular sampling pattern can be seen here.  

For the parametric moving average model, recall that the graph power spectrum is parameterized with $Q$ parameters; we use $Q=13$ in this example. As before, we perform a row subset selection of the matrix ${\boldsymbol \Psi}_{\rm MA}$ in a structured manner using Algorithm~\ref{alg:greedy}. We show in Figure~\ref{fig:psd_syn}, the (unconstrained) least squares estimate of the graph power spectrum computed using observations from $K=26$ nodes out of $N=100$ nodes ($74\%$ compression). The sampling pattern in this case is shown in Figure~\ref{fig:nodesample1_ma}. It can be seen that the greedy algorithm selects graph nodes in a clustered manner  { as the moving average model assumes that the power spectrum is smooth. }

For the parametric autoregressive approach, the graph power spectrum is parameterized with $P =3$ parameters. In this case, we choose $K_0 = 1$ graph node (indicated with a red circle) having the largest degree and we also observe nodes in the $3$-hop neighborhood of the selected node; the observed nodes (indicated with black circles) are shown in Figure~\ref{fig:nodesample1_ar}. In this example, we observe $K=26$ nodes out $N=100$ nodes to reconstruct the graph power spectrum. The least squares estimate of the G-AR power spectrum can be seen in Figure~\ref{fig:psd_syn}. {Although we had to recover only $P=3$ parameters, we observe all the nodes in the $P$-hop neighborhood of every selected node (i.e., we observe much more than $K_0P$ nodes).} 
%

\begin{figure*}
     \psfrag{celcius}{\tiny \hskip-2mm $^\circ$ Celcius}
\psfrag{Laplacian eigenvalues}{\hskip-10mm\footnotesize Shift-operator (adjacency) eigenvalues}
\psfrag{non-parametric approach 20 nodes}{ \scriptsize Non-parametric ($K=20$ nodes)}
\psfrag{True PSD}{ \scriptsize Empirical (uncompressed)}
\psfrag{Estimated PSD autoreggresive model 9 nodes}{\scriptsize Autoregressive ($K=9$ nodes)}
\psfrag{Estimated PSD moving average model 20 nodes}{\scriptsize Moving average ($K=20$ nodes)}
     \psfrag{dB}{\tiny dB}
     \centering
\begin{subfigure}[b]{0.52\columnwidth}
	     \centering	
	     \includegraphics[width=\columnwidth,height=1.5in]{./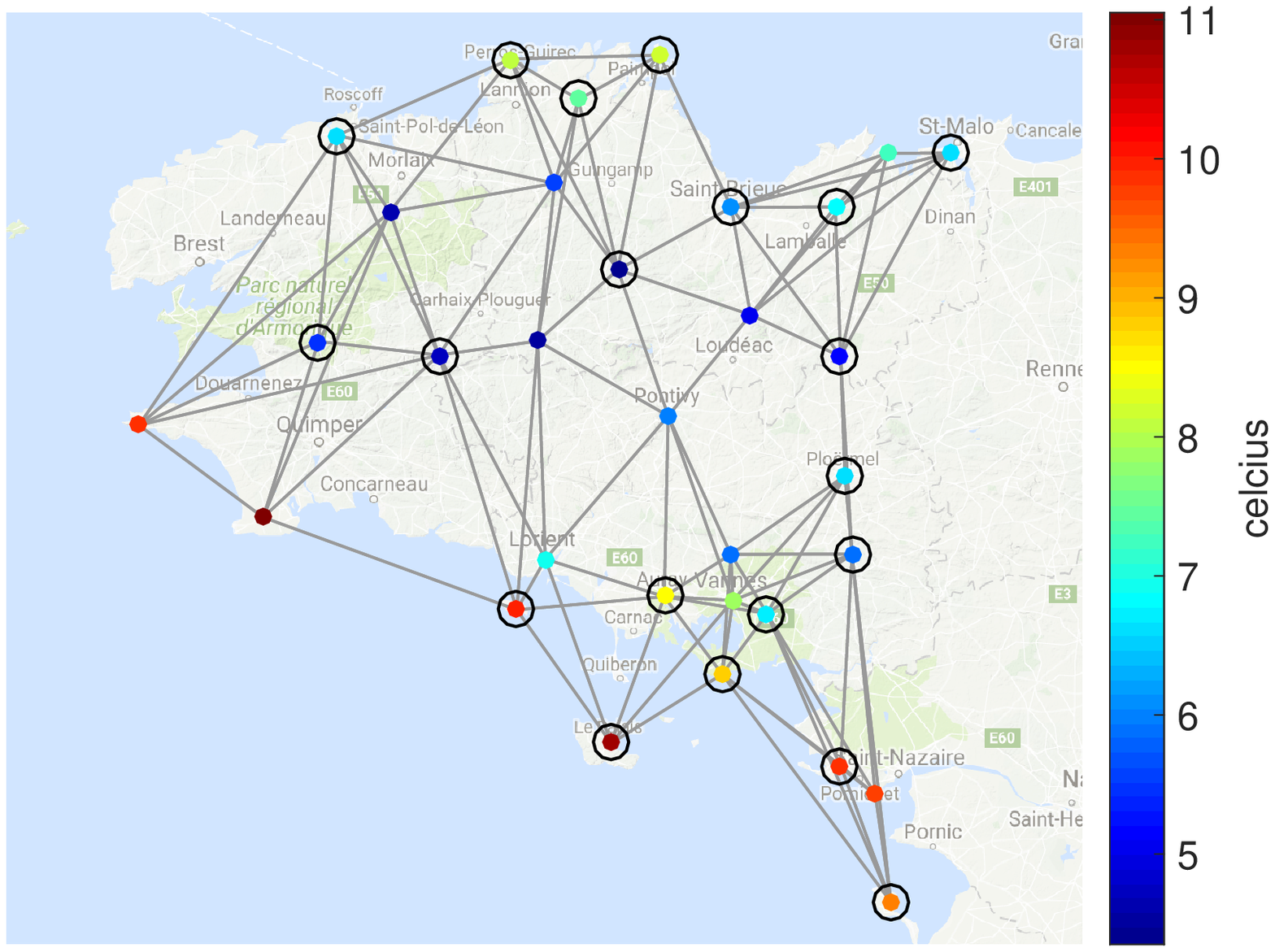}
	     \caption{}
	     \label{fig:nodesample2_nonpar}
	     \end{subfigure}
~\hskip1cm
\begin{subfigure}[b]{0.52\columnwidth}
	     \centering	
\includegraphics[width=\columnwidth,height=1.5in]{./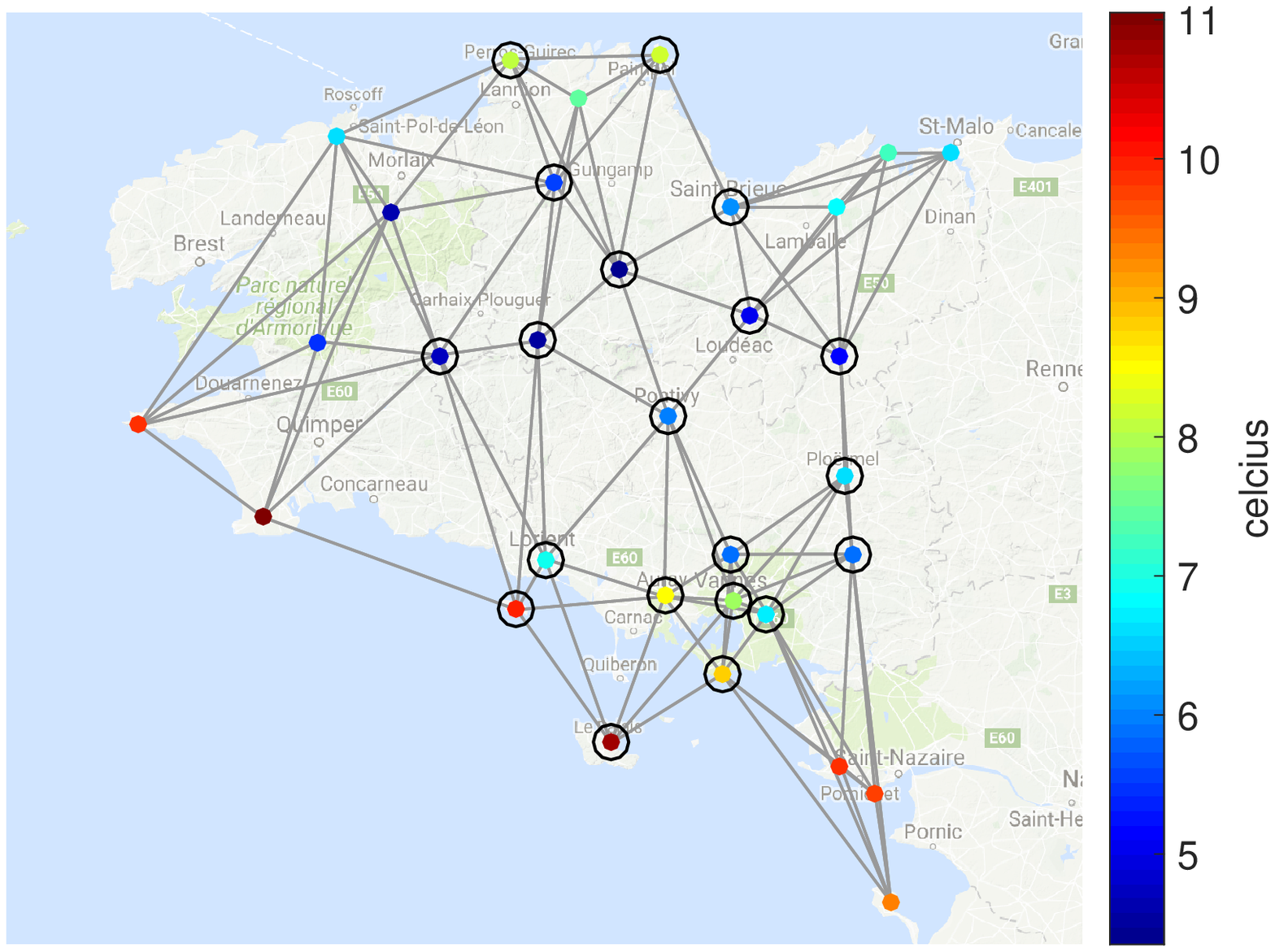}
	     \caption{}
\label{fig:nodesample2_ma}
\end{subfigure}
~\hskip1cm
\begin{subfigure}[b]{0.52\columnwidth}
	     \centering	
\includegraphics[width=\columnwidth,height=1.5in]{./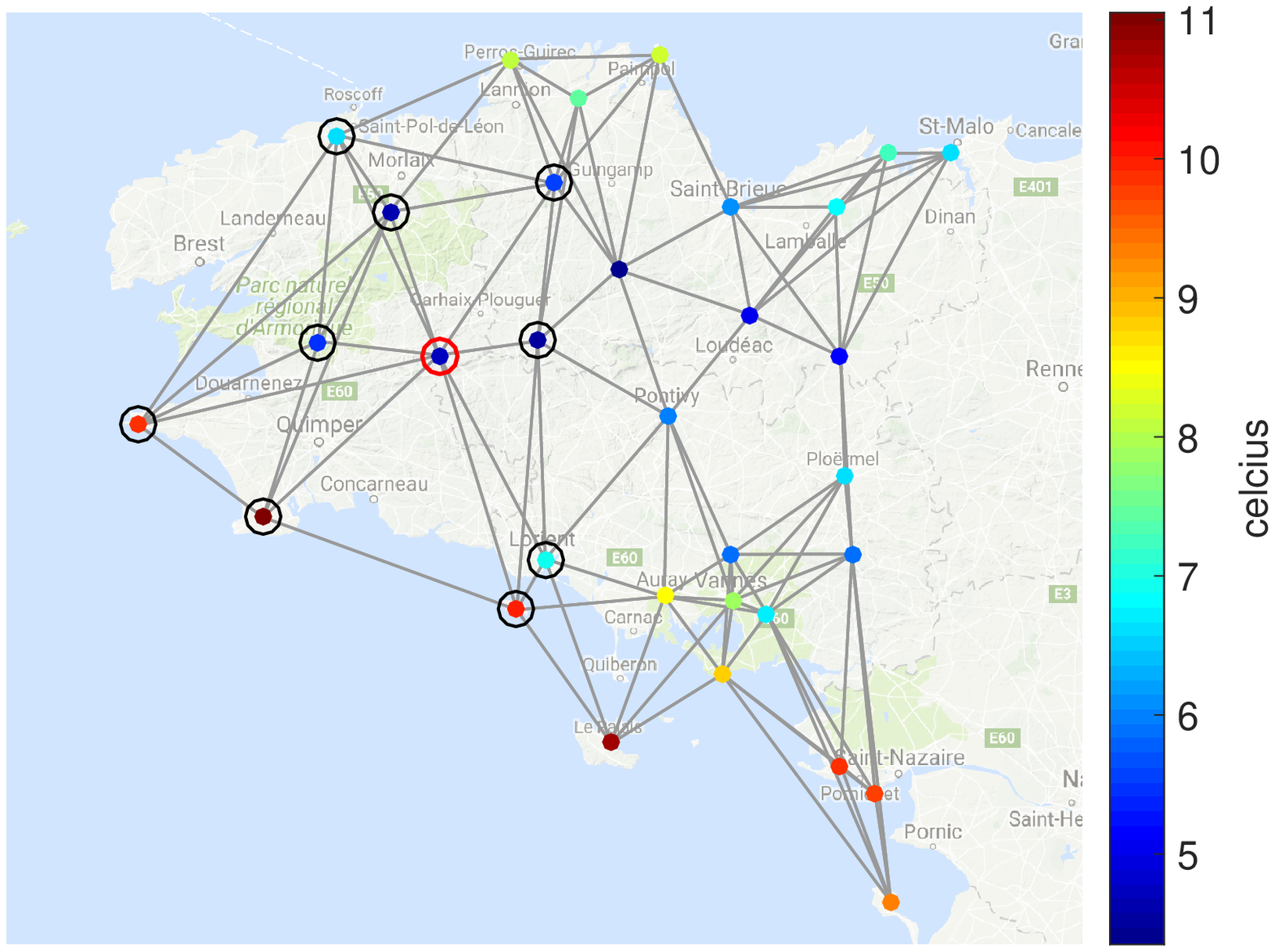}
	     \caption{}\label{fig:nodesample2_ar}
	     \end{subfigure}
\\[1.5em]
\begin{subfigure}[b]{0.45\textwidth}
	     \centering	
\includegraphics[width=0.7\columnwidth, height=1.75in]{./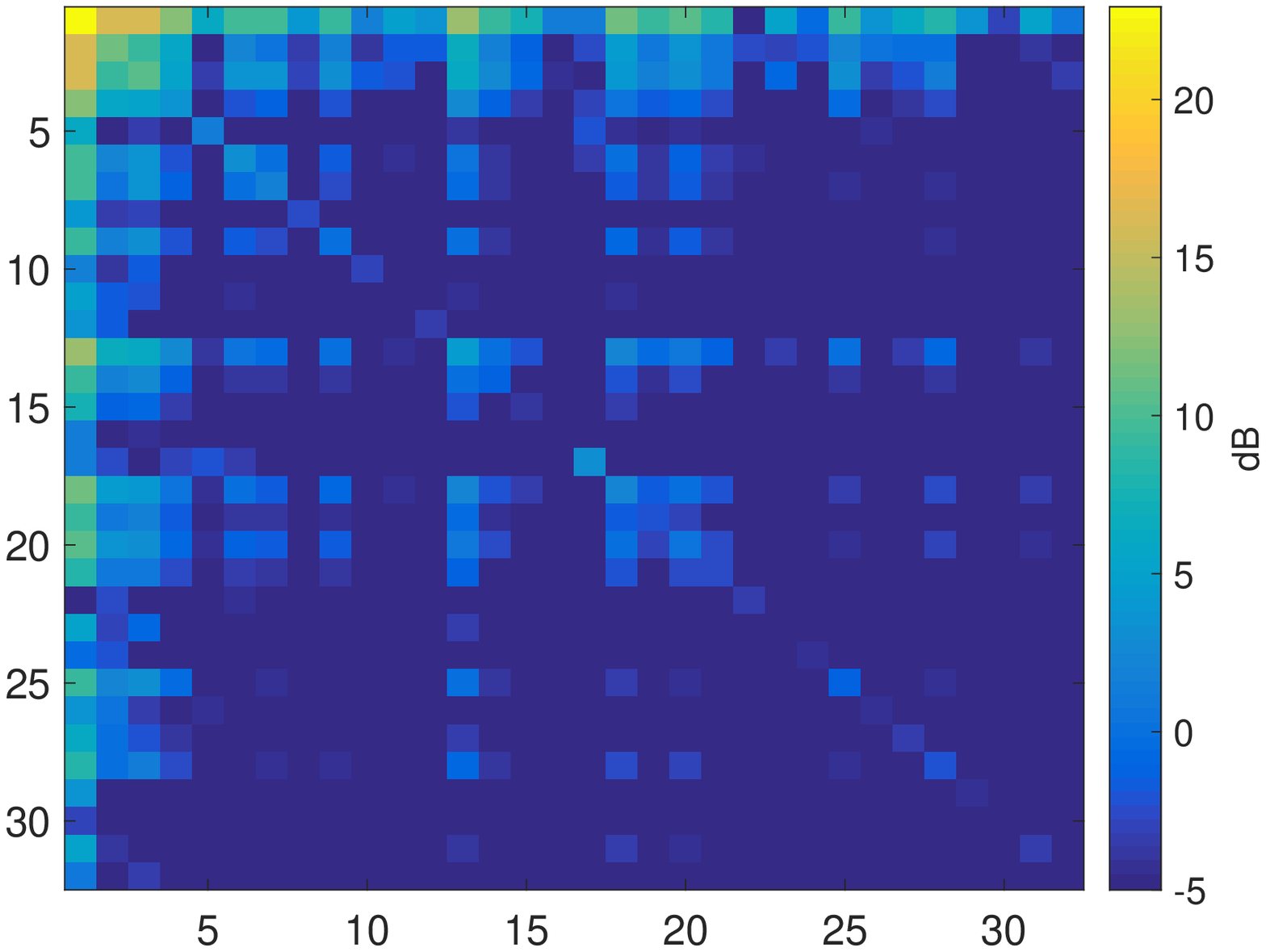}
	     \caption{}\label{fig:spectral_covariance}
	     \end{subfigure}	     
~
\begin{subfigure}[b]{0.45\textwidth}
\centering
                \includegraphics[width=0.7\columnwidth, height=1.75in] {./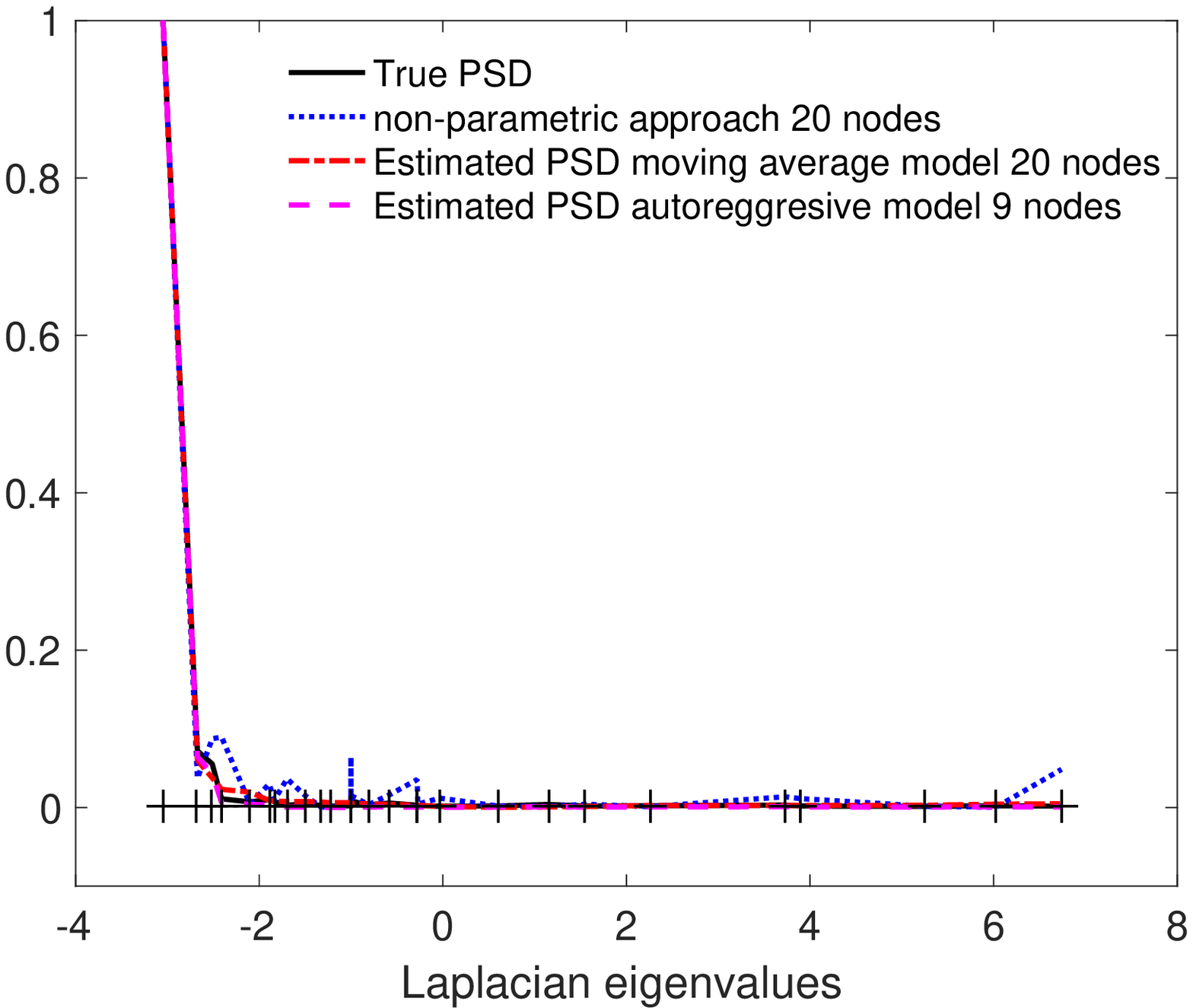}
                \caption{}
\label{fig:psd_real}
\end{subfigure}	     
     \caption{Sampling on graphs with $N=32$ weather stations. The sampled graph nodes are highlighted by the circles around the nodes and the node coloring simply denotes a realization of the graph signal. (a) Non-parametric model with $K = 20$. (b) Moving average model with $Q=11$ and $K = 20$. (c) Autoregressive model with $K_0=1$, where the $P$-hop neighborhood around the node indicated with the red circle is observed. (d) Spectral covariance matrix. (e) Graph power spectrum based on $N_s = 744$ snapshots. Markers along the x-axis indicate the eigenvalues of the adjacency matrix}
     \label{fig:nodesamp_real}
\end{figure*}
{In Figure~\ref{fig:mse}, we also provide some performance results based on the synthetic dataset. In particular, we show for different number of snapshots the performance of the estimators in terms of the normalized mean squared error (NMSE) defined in dB as 
$
{\rm NMSE} = 10 \log_{10} \, \sum_{m=1}^{\rm N_{\rm exp}} \| \p - \widehat{\p}_m\|_2^2/ (N_{\rm exp}\|\p\|_2),
$
where $\widehat{\p}_m$ denotes the graph power spectrum estimate during the $m$th Monte-Carlo experiment and $N_{\rm exp}$ is the number of Monte-Carlo experiments. Here, we use $N_{\rm exp} = 1000$.  

{
To begin with, Figure~\ref{fig:mse_nonpar} shows the performance of the developed least squares estimator for the nonparametric approach with $K=50$ (50$\%$ compression), and with $K=100$, i.e., no compression. For this example, we can see about a $4$~dB performance loss due to compression, and this gap reduces as $K$ increases. Furthermore, we can also see that, although the least squares estimator has the same slope as that of the Cram\'er-Rao lower bound (labeled as ``CRLB (50$\%$ compression)"), it does not achieve the Cram\'er-Rao lower bound. This gap can be reduced by solving a weighted least squares estimator, but incurs an additional computational cost due to inverting and updating the weighting matrix. 
For this particular scenario, although a full-column rank matrix $({\boldsymbol \Phi} \otimes {\boldsymbol \Phi}){\boldsymbol \Psi}_{\rm s}$ can be obtained for $K \geq 20$, but $K=20$ results in a very poor performance as ${\boldsymbol \Psi}_{\rm s}$ is highly sensitive to perturbations due to the finite sample effects. Nevertheless, the performance improves with the number of snapshots.}

In Figure~\ref{fig:mse_ma}, we can see the performance of the moving average approach for $Q=13$, for $K=10$ (90$\%$ compression, which is also the maximum possible compression for this example),  $K = 26$ (74$\%$ compression) and $K = 100$ (i.e., no compression). As before, we see a performance loss due to compression, but also, as the number of snapshots increases, the performance saturates. This is due to the limited filter order, and the performance gets better with increasing filter order. However, increasing the filter order worsens the condition number of $\Psib_{\rm MA}$, and we might have to resort to singular value decomposition based techniques to solve the least squares problem (now we simply solve \eqref{eq:ls_finite} using QR factorization technique through MATLAB's {\it backslash} ``\textbackslash"~operator). For this example, a full-column rank matrix $({\boldsymbol \Phi} \otimes {\boldsymbol \Phi}){\boldsymbol \Psi}_{\rm MA}$ is obtained for $K \geq 10$. Such a high compression is possible because of the low value of $Q$ that is assumed to be known. Also, as compared to the non-parametric model, due to a smaller filter order, ${\boldsymbol \Psi}_{\rm MA}$ is less sensitive to perturbations. This can be see in Figure~\ref{fig:mse_ma}, where we get a reasonable performance for the maximum possible compression with $K=10$.

Finally, in Figure~\ref{fig:mse_ar}, we show the performance of the autoregressive model for $P=3$ with $K =1$ and $K=100$, and for $P=6$ with $K=100$ we solve \eqref{eq:AR_uncomp} using least squares. Although we can see a similar behavior with respect to the performance loss due to compression and with respect to the error saturation due to a limited filter order, a more important thing to notice is that the autoregressive model has a similar performance as that of the moving average model, but with about $50\%$ fewer parameters.
}

{\subsubsection*{Synthetic dataset (circulant graph)} We illustrate the graph sampling theory developed for circulant graphs using a M\"obius ladder, which due to its structure finds applications within molecular chemistry (e.g., see~\cite{flapan1989symmetries}). A M\"obius ladder with $N=80$ nodes is shown in Figure~\ref{fig:minspars_mobius}. This graph has a circulant adjacency matrix, which we use as the shift operator. 

We have seen in Section~\ref{sec:cyclicgraphs} that for such circulant graphs it is possible to elegantly compute the optimal sparse samplers. For $N=80$, the minimal sparse rulers are length $K=15$ and one such (non-unique) sampling set is given by $\mathcal{K}= \{1, 2,  3,  6,  11,  16,    27, 38, 49, 60, 66, 72, 78,79,80\}$; see the corresponding selected nodes in Figure~\ref{fig:minspars_mobius}. Alternatively, we can also determine the sampling set using Algorithm~\ref{alg:greedy}; we show the selected nodes in Figure~\ref{fig:submod_mobius}. Now, the question is, how does this greedily designed sparse sampler compare with the minimal sparse ruler. To answer this, we plot, in Figure~\ref{fig:svd_submod} the singular values (i.e., the spectrum) of ${\boldsymbol T}({\boldsymbol w}) = {\boldsymbol \Psi}_{\rm s}^T ({\rm diag}[{\boldsymbol w}] \otimes {\rm diag}[{\boldsymbol w}]) {\boldsymbol \Psi}_{\rm s}$ with ${\boldsymbol w}$ being the minimal sparse ruler and for ${\boldsymbol w}$ computed using the greedy submodular design. For this example, we can see the resulting spectrum from both the sparse samplers are very similar, and that the greedy submodular design has a slightly worse condition number (i.e., the ratio of maximal singular value to minimal singular value).
}
\begin{figure*}
\psfrag{Laplacian eigenvalues}{\hskip-12mm\footnotesize Shift-operator (Laplacian) eigenvalues}
\psfrag{Empirical PSD}{ \scriptsize Empirical (uncompressed)}
\psfrag{dB}{ \tiny dB}
\psfrag{Estimated PSD moving average model 15 nodes}{\scriptsize Moving average ($K=15$)}
\psfrag{Estimated PSD moving average model 256 nodes}{\scriptsize Moving average ($K=256$)}
     \centering
\begin{subfigure}[b]{0.52\columnwidth}
     \centering
\includegraphics[width=\columnwidth,height=1.5in]{./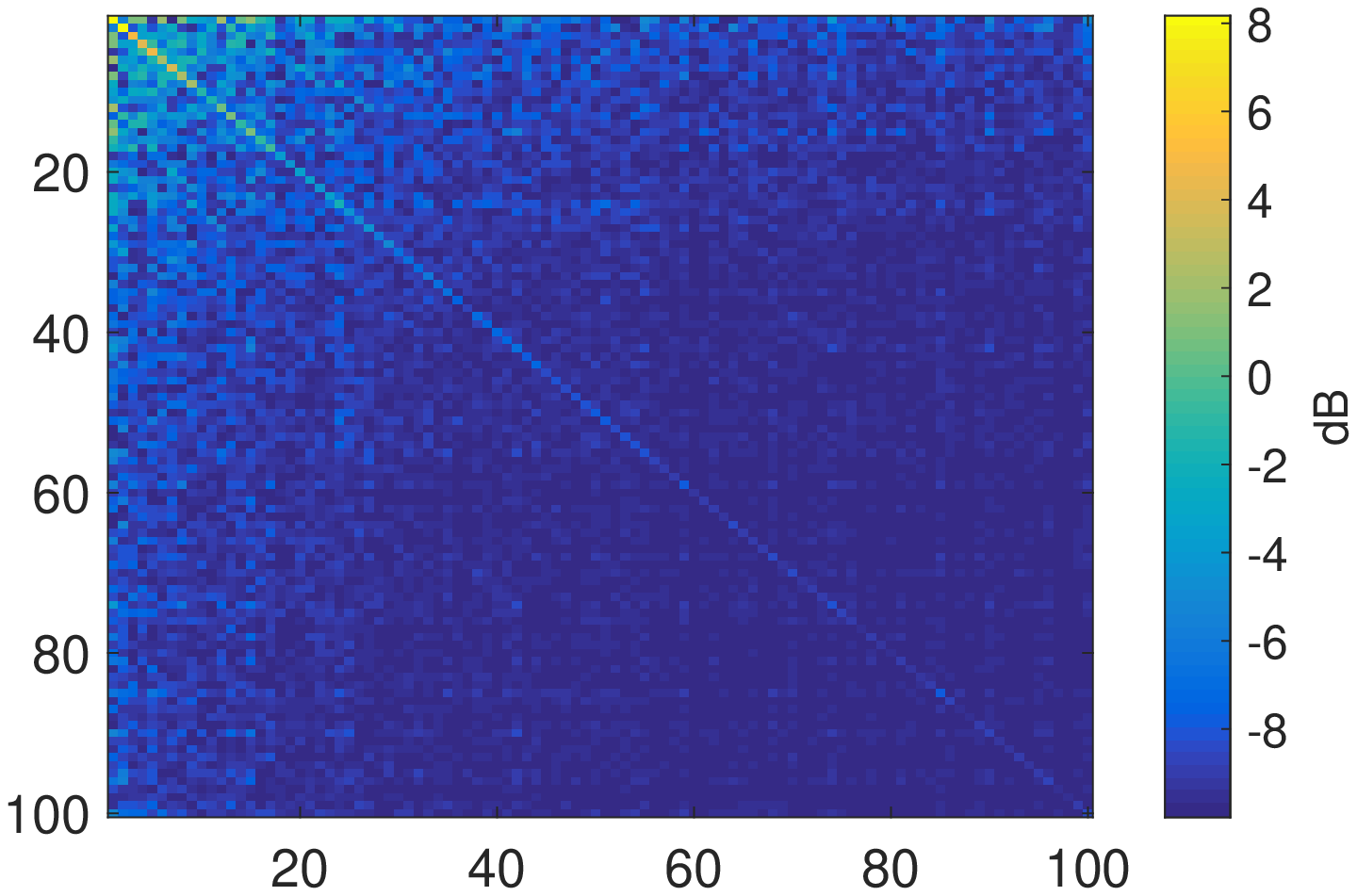}
\caption{}
\label{fig:usps_speccov}
        \end{subfigure}
~\hskip1cm
\begin{subfigure}[b]{0.52\columnwidth}
     \centering
     \includegraphics[width=\columnwidth,height=1.5in]{./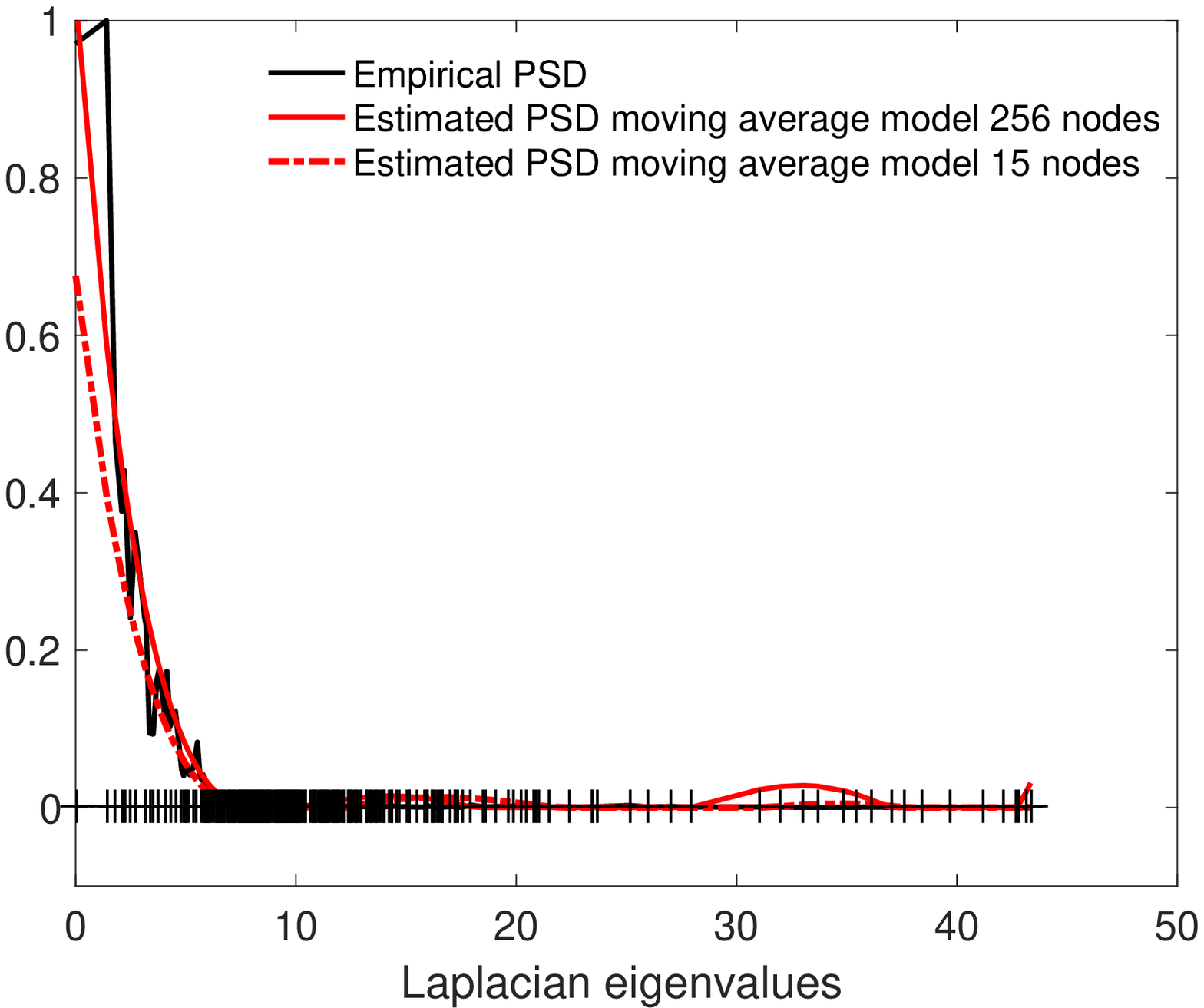}
     \caption{}
     \label{fig:usps_psd}
             \end{subfigure}
~\hskip1.25cm
\begin{subfigure}[b]{0.45\columnwidth}
     \centering
\includegraphics[width=\columnwidth,height=1.5in]{./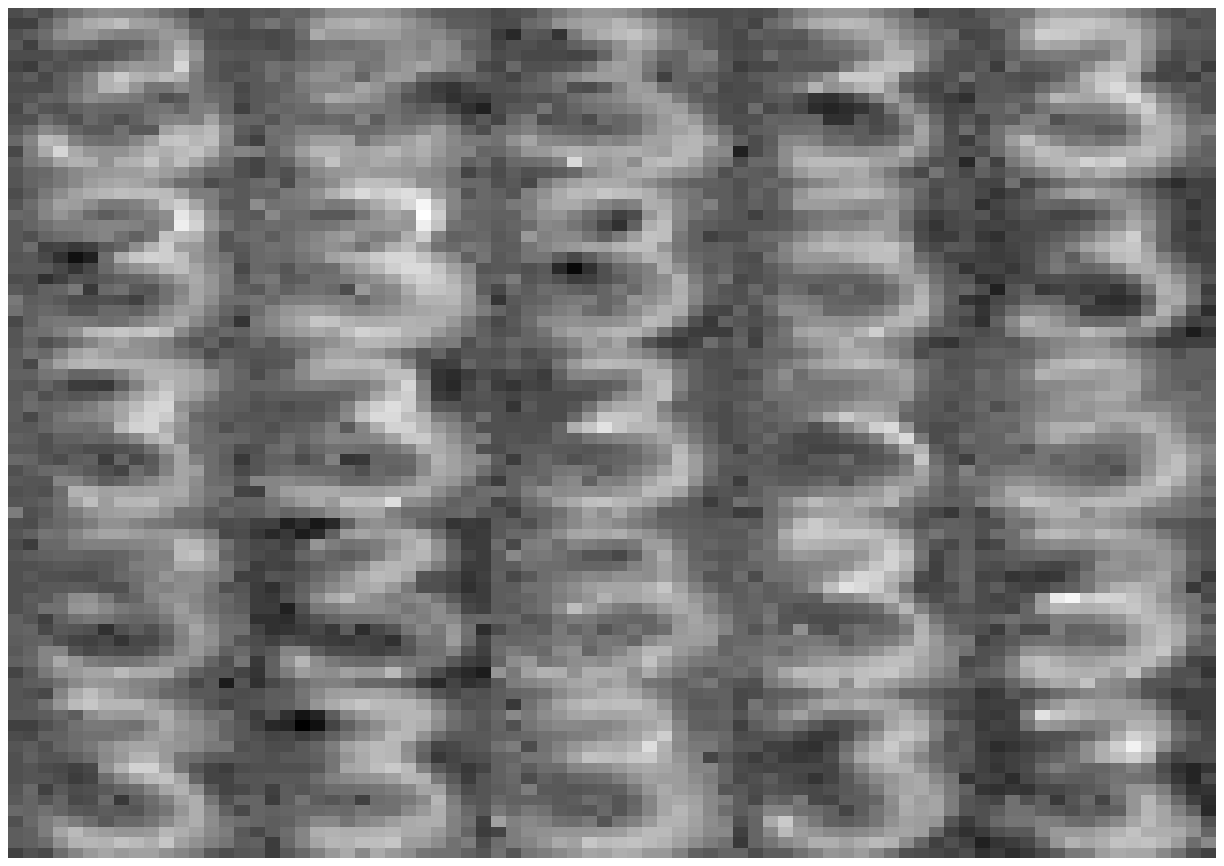}
\caption{}
\label{fig:usps_gen}
             \end{subfigure}
     \caption{Sampling nearest neighbor graph built using digit 3 ($16 \times$ 16 pixels) from the USPS dataset. (a) Spectral covariance matrix (only the upper left part is shown for better visibility, rest of the entries are zeroes). (b) Graph power spectrum based on $N_s = 50$ image snapshots. Markers along the x-axis indicate the eigenvalues of the Laplacian matrix. (c) 25 realizations of the generated images, which are obtained by graph filtering white noise. Here, the $Q=7$ G-MA filter coefficients are obtained by observing $K=15$ pixels.}
     \label{fig:USPS}
\end{figure*}

\subsubsection*{Real dataset (climatology)} For the real dataset, we use temperature measurements collected across $32$ different weather stations in the French region of Brittany\footnote{{This dataset was used in the context of stationary graph signal processing in~\cite{benjamin15eusipco,perraudin2016stationary}. Also, we would like thank the authors of~\cite{perraudin2016stationary} for making this as well as the USPS (preprocessed) datasets public.}}. A nearest neighbor graph is constructed as in~\cite{perraudin2016stationary} using the available coordinates of the weather station { such that each node has at least five neighbours.}  The reconstructed graph can be seen in Figure~\ref{fig:nodesamp_real}. Alternatively, the method suggested in~\cite{chepuri2016learning} can be used to construct a sparse graph based on training data. There are $N_s = 744$ observations (for 31 days and 24 observations per day) per weather station available. {We use the adjacency matrix as the shift operator in this example.}

{ We have removed the (sample) mean from each station independently, thus forcing the first moment to zero~\cite{perraudin2016stationary}. This way we artificially obtain ${\boldsymbol m}_\x = m_\x \u_1$ with $m_\x = 0$. After removing the mean, the temperature data records are nearly stationary on this graph, i.e., the sample covariance matrix (denoted by $\widehat{\R}_\x$) in the graph spectral domain (i.e., the spectral covariance matrix $\U \widehat{\R}_\x\U$) has most of its energy, i.e., about $89\%$ of the energy of $\U \widehat{\R}_\x\U$, along the main diagonal; see the spectral covariance in Figure~\ref{fig:spectral_covariance}. The stationarity of this dataset on the shift operator increases when processing the so-called intrinsic mode functions of the temperature recordings instead of the raw data as detailed in~\cite{girault2015signal}, but we will simply use the mean-removed raw dataset here.}

We carry out the same experiments as for the synthetic data. For the non-parametric and moving average approaches, the samplers are designed using a greedy algorithm as discussed in Section~\ref{sec:grdy_manonpar}. In particular, for the non-parametric approach, we observe $K=20$ nodes out of $N=32$ nodes as shown with black circles in Figure~\ref{fig:nodesample2_nonpar}. For the moving average approach, we use $Q=11$, and observe $K=20$ out of $N=32$ nodes to recover the G-MA parameters.  Finally, for the autoregressive approach, we model the graph power spectrum with $P=1$ scalar parameter. We select one node (i.e., $K_0=1$) that has the largest degree as indicated with a red circle in Figure~\ref{fig:nodesample2_ar}, and we also observe nodes in the one-hop neighborhood of the selected node. So, we observe 9 nodes in total in this case. The uncompressed graph power spectrum computed from all the available temperature measurements as well as the least squares estimate of the graph power spectrum computed from the subsampled observations using the non-parametric and parametric approaches can be seen in Figure~\ref{fig:psd_real}, where we can see that the shape of estimated power spectrum from different approaches is similar to that of the empirical graph power spectrum.

{
\subsubsection*{Real dataset (USPS handwritten digits)}
Before concluding, we will demonstrate the potential of parametric modeling as well as sampling in the graph setting with an example using the USPS dataset, where we will focus only on {\it digit 3} for the sake of illustration. We construct a 20 nearest neighbor graph with 50 images each containing $16 \times 16$ pixels as in~\cite{perraudin2016stationary}. This means that the graph signal ${\x}$ is of length $256$, where each pixel corresponds to a graph node, and the covariance matrix $\R_\x$ is of size $256 \times 256$. The stationarity of this dataset on such a graph has been demonstrated in~\cite{perraudin2016stationary}; see the diagonal dominance (with about $82\%$ of the energy in the diagonal entries) of the spectral covariance matrix in Figure~\ref{fig:usps_speccov}. 

We have seen in Section~\ref{sec:parametric} that it is possible to model the graph power spectrum with fewer parameters, which means that (a) we need to store or transmit only a few parameters, and (b) we can achieve stronger compression rates. To illustrate this, we perform an experiment, where we view {\it digit 3} of the USPS dataset as a realization of a graph second-order stationary signal obtained by graph filtering white noise using a graph moving average filter with $Q=7$.  In Figure~\ref{fig:usps_psd}, we show the empirical graph power spectrum computed from $50$ images and the graph power spectrum computed using the moving average method by sampling only $K=15$ pixels (96$\%$ compression) as well as $K=256$ (i.e., no compression). That is to say, we can quickly learn the parameters of interest without visiting the entire training set. Next, based on the reconstructed graph power spectrum obtained by sampling $K=15$ pixels, we generate $25$ realizations of graph signals by graph filtering white noise, where the frequency response of the graph filter is simply computed as $h_{f,n} = |p_n|^{1/2}$ for $n=1,\ldots,N$ (here, we use the absolute value because we do not solve \eqref{eq:ls_finite} with a nonnegativity constraint). These 25 realizations are shown in Figure~\ref{fig:usps_gen}, where we can see that the resulting signals have the shape of {\it digit 3} corroborating that the signal is stationary on the nearest neighbor graph, and more importantly these signals can be generated from fewer parameters, which are estimated by observing only a small subset of pixels.

}


\section{Concluding Remarks} \label{sec:conc}
In this paper we have focused on sampling and reconstructing the second-order statistics of stationary graph signals. The main contribution of the paper is that by observing a significantly smaller subset of vertices and using simple least squares estimators, we can reconstruct the second-order statistics of the graph signal from the subsampled observations, and more importantly, without any spectral priors. The results provided here generalize the compressive covariance sensing framework to the graph setting. Both a nonparametric approach as well as parametric approaches including moving average and autoregressive models for the graph power spectrum are discussed. A near-optimal low-complexity greedy algorithm is developed to design a sparse sampling matrix that selects the subset of graph nodes. 
\vspace*{-5mm}
\appendices
\section{Lemma~\ref{lem:Psifullrank}: Rank of self Khatri-Rao products} \label{app:proofselfkr}

By the definition in \eqref{eq:graphShift}, $\U$ forms an orthogonal basis and hence full rank. As a result, the sum 
$
a_1\u_1 +a_2\u_2+ \cdots +a_N\u_N 
$
equals zero only when $a_1=a_2=\cdots=a_N=0$.

The remainder of the proof is based on contradiction. Assume that the matrix $\bar{\U} \circ \U = [\bar{\u}_1 \otimes \u_1, \cdots, \bar{\u}_N \otimes \u_N]$ does not have full column rank. This means that the sum 
\begin{equation}
\begin{aligned}
&b_1(\bar{\u}_1 \otimes \u_1) +  \cdots + b_N(\bar{\u}_N \otimes \u_N)\\
&= b_1 \left[\begin{array}{c}\bar{u}_{1,1}\u_1 \\ \vdots \\\bar{u}_{1,N}\u_1  \end{array}\right] + \cdots  + b_N\left[\begin{array}{c}  \bar{u}_{N,1}\u_N \\ \vdots \\  \bar{u}_{N,N}\u_N\end{array}\right] = {{\bf 0}}
\end{aligned}
\end{equation}
when one or more $b_i \bar{u}_{i,j}$ are nonzero. This is possible only if $\U$ is singular. Hence a contradiction, implying that ${\rm rank}(\bar{\U} \circ \U) = N$. 

%
%
%
%
\vspace*{-5mm}
\section{Theorem~\ref{theo:fullrank_spectral}: Conditions for a Valid Sampler} \label{app:theo1}

The rank of the product of two matrices $\A$ and $\B$ is given by~\cite{meyer2000matrix} 
$
{\rm rank}(\A\B) \leq  \min\{{\rm rank}(\A),{\rm rank}(\B)\}, 
$
and equality holds if and only if ${\rm null}(\A) \cap {\rm ran}(\B) = \emptyset$.


We know from Lemma~\ref{lem:rankKron} that ${\rm rank}({\boldsymbol \Phi} \otimes {\boldsymbol \Phi})$ is $K^2$ if ${\rm rank}({\boldsymbol \Phi}) = K$ and from Lemma~\ref{lem:Psifullrank} that $\Psib_{\rm s}$ has full column rank. This implies that if $K^2 \geq N$, then $({\boldsymbol \Phi} \otimes {\boldsymbol \Phi}) {\boldsymbol \Psi}_{\rm s}$ has full column rank provided that the null space of $\Phib \otimes \Phib$ (which is generated by the basis vectors in the null space of $\Phib$) does not intersect with the space spanned by the columns of $\Psib_{\rm s}$.

%
\bibliographystyle{IEEEtran}
\bibliography{IEEEabrv,//users/localadmin/Dropbox/Bibfiles/refs,//users/localadmin/Dropbox/Bibfiles/strings}

\end{document}